\documentclass[a4paper]{article}
\usepackage[T1]{fontenc}
\usepackage[utf8]{inputenc}
\usepackage{fullpage}
\usepackage{authblk}
\usepackage{amsmath,amssymb,amsthm}
\usepackage{thmtools}
\usepackage{enumerate}
\usepackage{multicol}
\usepackage{todonotes}
\usepackage{svg}
\usepackage[linesnumbered,ruled,titlenumbered]{algorithm2e}

\usepackage{tikz,tikz-qtree}
\usetikzlibrary{arrows,arrows.meta,automata,backgrounds,calc,decorations.pathmorphing,fit,positioning,shapes,trees,bbox}

\usepackage{hyperref}
\hypersetup{
    colorlinks=false,
    linkcolor=blue,
    urlcolor=cyan,
}
\usepackage{cleveref}

\declaretheorem{theorem}
\declaretheorem{proposition}

\declaretheorem{lemma}

\title{Enumerating All Directed Spanning Trees in Optimal Time}
\author[1]{Paweł Gawrychowski\thanks{Partially supported by the Polish National Science Centre grant
number 2023/51/B/ST6/01505.}}
\author[1]{Marcin Knapik}
\affil[1]{University of Wrocław, Poland}
\date{}

\newcommand{\Oh}{\ensuremath{\mathcal{O}}}
\newcommand{\Ah}{\ensuremath{\mathcal{A}}}

\usepackage{graphicx}
\newcommand{\softOh}{\ensuremath{\mathcal{\tilde{O}}}}
\newcommand{\polylog}{\ensuremath{\softOh(1)}}

\newcommand{\A}{\texttt{A}}
\newcommand{\G}{\texttt{G}}
\newcommand{\e}{\texttt{e}}

\newcommand{\Trim}[1]{\textsc{Trim(}#1\textsc{)}}

\newcommand{\IterateSolutions}{\textsc{Choose()}}
\newcommand{\Recurse}[1]{\textsc{Recurse(}#1\textsc{)}}

\newcommand{\TrimAndRecurse}[1]{\textsc{TrimAndRecurse(}#1\textsc{)}}

\newcommand{\ADD}[1]{\texttt{ADD(}#1\texttt{)}}
\newcommand{\REMOVE}[1]{\texttt{REMOVE(}#1\texttt{)}}
\newcommand{\REPORT}{\texttt{REPORT()}}

\SetKwComment{Comment}{$\triangleright$\ }{}
\SetCommentSty{emph}

\begin{document}

\maketitle

\begin{abstract}
    We consider the problem of enumerating, for a given directed graph
    $G=(V,E)$ and a node $r\in V$,
    all directed spanning trees of $G$ rooted at $r$. For undirected graphs,
    the corresponding problem of enumerating all spanning
    trees has received considerable attention, culminating in the algorithm
    of Kapoor and Ramesh [SICOMP 1995]
    working in $\Oh(n+m+N)$ time, where $N, n, m$ denote the number of spanning
    trees, vertices, and edges of $G$, respectively.
    In the area of enumeration algorithms, this is known as Constant Amortised
    Time, or CAT.
    To achieve only constant time per each spanning tree, the algorithm outputs the relative change
    between the subsequent spanning trees
    instead of the whole spanning trees themselves.

    The natural generalization to enumerating all directed spanning trees has
    been already considered by
    Gabow and Myers [SICOMP 1978], who provided an $\Oh(n+m+Nm)$ time
    algorithm. This time complexity
    has been improved upon a couple of times, and in 1998 Uno introduced the framework
    of trimming and balancing
    that allowed him to obtain an $\Oh(n+m\log n+N\log^{2}n)$ time algorithm
    for this problem. By plugging
    in later results it is immediate to improve the time complexity to
    $\Oh(n+m+N\log n)$, but achieving the optimal bound of $\Oh(n+m+N)$
    seems problematic within this framework.

    In this paper, we show how to enumerate all directed spanning trees in
    $\Oh(n+m+N)$ time and $\Oh(n+m)$ space, matching the time bound
    for undirected graphs. Our improvement is obtained by designing a purely graph-theoretical characterization
    of graphs with very few directed spanning trees, and using their structure
    to speed up the algorithm.
\end{abstract}

\section{Introduction}

The goal in the area of enumeration algorithms is listing all solutions to a given algorithmic problem.
For example, for the problem of finding a spanning tree of a given graph, we would ask for listing
all such spanning trees. This is motivated by the (practical) need of choosing the best solution
without knowing the objective function: by being able to list all of them, we can
pipeline the whole process and evaluate each solution with a black-box call to the objective function.
The number of solutions might be even exponential in the size of the input, so we should
not aim to design an algorithm working in polynomial time in the size of the input.
Instead, the goal is to design an algorithm working in polynomial (ideally, linear) time in the
number of solutions. Volume 4A of Knuth's well-known book~\cite{Knuth4A} provides an introduction
to listing basic combinatorial objects. For a reasonably up-to-date catalog of enumeration algorithms,
please see Wasa's list~\cite{DBLP:journals/corr/Wasa16}.

The focus of this paper is enumerating all directed spanning trees of a given graph. In this problem,
we are given a directed graph $G=\langle V,E\rangle$ with a distinguished root node $r\in V$.
A directed spanning tree (or an arborescence) of $G$ is a subset of $n-1$ edges $E'\subseteq E$
such that it is possible to reach any node $u\in V$ from $r$ using only the edges from $E'$.

Shinoda~\cite{shinoda1968finding} designed an exponential time algorithm for enumerating
all arborescences. The time complexity has been significantly improved by
Gabow and Myers~\cite{DBLP:journals/siamcomp/GabowM78}:
for a graph on $n$ nodes and $m$ edges, their algorithm enumerates all $N$ solutions in
$\Oh(n+m+Nm)$ time.
For undirected graphs (and enumerating undirected spanning trees) their method actually
works in $\Oh(n+m+Nn)$ time, refining the $\Oh(n+m+Nm)$ time complexity achieved by a backtracking
approach first suggested by Minty~\cite{minty2003simple} and then formalised by Read and Tarjan~\cite{DBLP:journals/networks/ReadT75}
(and significantly improving on the exponential time algorithm given by McIlroy~\cite{DBLP:journals/cacm/McIlroy69a}).

It might appear that $\Oh(n+m+Nn)$ is the asymptotically optimal time complexity for outputting
all $N$ spanning trees (undirected or directed), as each of them consists of $n-1$ edges. To overcome
this barrier, the standard assumption in enumeration problems is to only output the difference
between the previous and the next solution. That is, we only explicitly output the very first
spanning tree $T_{1}$, and then generate each subsequent spanning tree by specifying which edges
should be removed from $T_{i}$ and which edges should be then added to obtain $T_{i+1}$.
Kapoor and Ramesh~\cite{DBLP:journals/siamcomp/KapoorR95} designed such an algorithm for
enumerating undirected spanning trees working in $\Oh(n+m+N)$ time and $\Oh(nm)$ space,
and Shioura, Tamura, and Uno~\cite{DBLP:journals/siamcomp/ShiouraTU97} improved the space complexity
to $\Oh(n+m)$. In fact, both algorithms produce $T_{i+1}$ from $T_{i}$ by exchanging edges,
that is, repeatedly adding a new edge and removing one of the edges on its fundamental cycle.
As observed by Matsui~\cite{matsui1997flexible}, this can be phrased as traversing a rooted spanning
tree on the skeleton graph of the spanning tree polytope.
We will call an enumeration algorithm that outputs only the difference between every two consecutive solutions
an implicit enumeration algorithm (as opposed to an explicit enumeration algorithm).
An implicit enumeration algorithm that produces all $N$ solutions in $\Oh(N)$ time is sometimes
called a Constant Amortised Time (CAT) algorithm (technically speaking, this is not constant amortised time
per solution as usually defined in the analysis of algorithms, as we only measure the total time to
output all the solutions, so in particular we are not guaranteed that, for each $t\leq N$,
the algorithm produces the first $t$ solutions in $\Oh(t)$ time). Thus, the goal is to design a CAT.

For arborescences, Kapoor and Ramesh~\cite{DBLP:journals/algorithmica/KapoorR00} designed
an implicit enumeration algorithm working in $\Oh(n^{3}+Nn)$ time (with an implicit assumption
that there are no duplicate edges in the graph), which can be easily converted into an explicit
enumeration algorithm working in the same time complexity. Even ignoring the additional preprocessing taking
$\Oh(n^{3})$ time, this does not really benefit from being allowed to only output the difference between the previous
and the next solution, and leaves designing a CAT algorithm as an open problem.
Uno~\cite{uno1996algorithm} designed an algorithm for enumerating arborescences
that leverages a structure for maintaining a minimum weight spanning tree. By plugging in
the (later) structure of Holm, de Lichtenberg, and Thorup~\cite{DBLP:journals/jacm/HolmLT01}, the time
complexity becomes $\Oh(n+(m+N)\log^{4}n)$. In a later paper~\cite{uno1998new}, he designed
another technique that avoids using any such black boxes, and achieves better time complexity
of $\Oh(n+m\log n+N\log^{2}n)$. 
Next, we briefly describe these previous approaches.

\paragraph{Overview of the previous algorithms.}
Probably the most natural approach to generate all arborescences is based on choosing an edge $e$ and recursively
reporting all arborescences that do and do not include $e$, respectively.
More specifically, in the former recursive call we contract $e$, while in the latter we simply
remove it from the graph. Further, two optimizations can be made: if $e$ belongs to every
arborescence then there is no need for the latter call, and similarly if $e$ does not belong to
any arborescence then there is no need for the former call.
Such an approach (or rather its version for undirected spanning trees) can be traced back to Minty~\cite{minty2003simple},
with a more precise formulation and an upper bound on the running time given by
Read and Tarjan~\cite{DBLP:journals/networks/ReadT75}.
Gabow and Myers~\cite{DBLP:journals/siamcomp/GabowM78} showed how to organize
such a recursion with DFS so that detecting if an edge is used by all arborescences is easy,
and obtained an algorithm for enumerating arborescences in $\Oh(n+m+Nm)$ time.
This still falls short of the $\Oh(n+m+Nn)$ time complexity that would be optimal
if we were to explicitly output each arborescence.

Kapoor and Ramesh~\cite{DBLP:journals/algorithmica/KapoorR00} designed an algorithm
for explicitly enumerating all arborescences in $\Oh(n^{3}+Nn)$ time (for a simple graph).
Their algorithm starts with a single arborescence $T$ obtained with a DFS search, and maintains
its sets of \textit{back} edges (for which a head is ancestor of a tail) and \textit{non-back} edges (all remaining non-tree edges).
Among all non-back edges it chooses $f$ with the lowest preorder of its head
(breaking ties arbitrarily), and consider either including it in the arborescence by exchanging with $e$
(which requires removing all other edges with the same head) or discarding it.
After the exchange, some back edges may become non-back. Let $h(e)$ and $t(e)$ denote a head and a tail of an edge $e$, respectively. 
To correctly detect them, we iterate over all the 
nodes $w$ in the subtree of $h(f)$ and we consider all back edges 
with the tail in $w$. Then those that have their head between $h(f)$ and
the lowest common ancestor of $h(f)$ and $t(f)$ become non-back, and we need to list them.
To this end, for each $w$ an additional structure
for finding the first back edge with the tail in $w$ and the head above some node $p$
is stored. This requires $\Oh(n^2)$ space, and $\Oh(n^3)$ time for
preprocessing.
Thanks to the properties of the DFS this structure does not really change
throughout  the algorithm, so we are able to list
the $d$ sought edges in $\Oh(n + d)$ time.
Moving each of those $d$ edges also takes $\Oh(n)$ time, however, we observe that each such edge 
corresponds to a new arborescence, which results in the total complexity of 
$\Oh(n^3+Nn)$ time.

A faster implicit enumeration algorithm using the so-called Reverse Search Technique has been proposed by Uno~\cite{uno1996algorithm}.
First, he introduced parent-child relationship between some arborescences. 
Let $T_{0}$ be an arborescence obtained with a DFS search, then for all edges we index them with the preorder in $T_0$ of their
heads. Now, for any arborescence $T_c$ we say that its parent arborescence $T_p$ is constructed by selecting an edge
$f \in T_c \setminus T_0$ with the lowest index, denoted $v^*(T_c)$, and replacing it with $e \in T_0$ such that $h(e)=h(f)$.
Then, the algorithm iterates over all children of the current $T_{p}$ and recurses.
To this end, it considers all non-$T_0$ edges $f$ with
index less than $v^*(T_p)$ and if replacing it with the corresponding tree edge
does not produce a cycle, treats it as a valid child ($f$ is a non-back edge), and the set of such edges of $T$, stored in
the order of head indices, is denoted by $S(T)$. To
maintain this set after selecting $f$, we observe that $S(T_c)$ is some prefix of $S(T_p)$ extended by those back edges of $T_p$
which have their head in a subtree of $v^*(T_c)$ and tail in the inner part of the path from $v^*(T_c)$ to its LCA with $t(f)$. The latter part is more
difficult to compute but one can leverage any structure for maintaining the minimum spanning tree in
an undirected graph, in which the edges of $T_{p}$ have weights 0 while the back arcs have weights $n$ minus their index.
In the end we obtain $\Oh(n+m + N\cdot D(n, m))$ and $\Oh(n+m + DS(n, m))$, where $D(n, m)$ and
$DS(n, m)$ are the time and space complexities of such a data structure.

Then, Uno~\cite{uno1998new} described an even faster implicit enumeration algorithm based on alternating
between trimming and balancing (this is a general framework applicable to other problems as well~\cite{DBLP:conf/cocoon/Uno99}).
Trimming ensures that each edge belongs to at least one but not all arborescences.
In the original paper, this is guaranteed in $\Oh(m\log n)$ time, but one can leverage known results on
constructing the dominator tree in linear time to improve this to $\Oh(m)$.
Then, it can be seen that a trimmed graph on $m$ edges admits $\Omega(m)$ arborescences.
Next, the main technical contribution is to show that such graph contains an edge $e$
such that there are $\Omega(m)$ arborescences containing and not containing $e$.
By recursing on such an edge we can guarantee that the total time is $\Oh(n+m+N\log n)$,
very roughly speaking because the recursion tree cannot be very unbalanced.

\paragraph{Our result.}
Our main result is an algorithm that enumerates all $N$ arborescences
of a graph on $n$ nodes and $m$ edges in $\Oh(n+m+N)$ time. The first
directed spanning tree $A_{1}$ is output explicitly, and for each subsequent
arborescence $A_{i+1}$ we specify which of the edges of the previous arborescence
$A_{i}$ should be removed and which new edges should be added. Further, the space
complexity of our algorithm
 (disregarding the space used by append-only output tape for writing down which edges should be removed
and added) is only $\Oh(m)$, and the delay between obtaining two subsequent
arborescences is also $\Oh(m)$.
In other words, the delay is always bounded by $\Oh(m)$,
but only constant on average.

\begin{restatable}{theorem}{delay}
\label{thm:delay}
All $N$ arborescences of a directed graph on $n$ nodes and $m$ edges can be reported in
$\Oh(n+m+N)$ total time and $\Oh(n+m)$ space, while the delay between any two 
reported solutions is $\Oh(m)$.
\end{restatable}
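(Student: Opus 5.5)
We follow Uno's trimming-and-recursion framework, but replace balancing by a structural understanding of graphs with few arborescences. The algorithm is a binary recursion: pick an edge $e$, recurse on $G/e$ (arborescences containing $e$) and on $G-e$ (those avoiding $e$), always \emph{trimming} first. Trimming removes edges lying in no arborescence and contracts edges lying in all of them; with a linear-time dominator tree this costs $\Oh(m')$ on the current graph with $m'$ edges. Outputs are produced as differences. The current partial arborescence is kept as a stack of contracted edges plus one explicit completion, so moving between consecutive leaves of the recursion only reports the edges that changed. The aim is to charge all work to the leaves, which are the $N$ arborescences.

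\textbf{Step 1 (counting lemma).} I would first prove that a trimmed graph with $n'$ nodes and $m'$ edges has $\Omega(m'-n')$ arborescences, indeed at least $m'-n'+2$. Every non-tree edge can be exchanged into a fixed arborescence, giving that many distinct trees.

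\textbf{Step 2 (small-graph characterization).} The main obstacle is the regime where the number $N'$ of arborescences is only about $m'-n'$. Here the cost $\Oh(m')$ of trimming cannot be charged to the children when the recursion is unbalanced. Uno's balancing pays a $\log$ factor at exactly this point. My plan is a purely graph-theoretic characterization. If a trimmed graph has fewer than $c\cdot(m'-n')$ arborescences for a suitable constant $c$, then after contracting forced paths it has a very restricted shape: essentially a single ``cycle-like'' core where all non-tree edges share heads along one path. For such graphs all arborescences can be listed directly by a simple sweep in $\Oh(m'+N')$ time, with no recursion. I would try to prove this by taking a DFS arborescence $T$ and counting exchanges along pairs of non-tree edges. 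If two non-tree edges have heads that are unrelated, or tails in distinct branches, then performing both exchanges simultaneously yields quadratically many trees. So the absence of many trees forces a chain structure. Conversely, when $N'\ge c(m'-n')$, I would choose $e$ so that both branches remain nontrivial. It suffices that the contracted/deleted graphs each keep a constant fraction of the excess $m'-n'$, or else fall into the small case.

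\textbf{Step 3 (amortization).} With this choice, each recursion node of cost $\Oh(m'-n'+1)$ (the degree-one parts are handled by contraction and never re-scanned) has $\Omega(m'-n')$ leaves below it. The two children partition those leaves, and each child's excess is at most a constant fraction of the parent's plus the leaves it owns. A potential argument, with potential proportional to $N'$ minus excess, then gives total time $\Oh(n+m+N)$.

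\textbf{Step 4 (space and delay).} For space, contractions and deletions are undone on return using a trail of $\Oh(m)$ total size, since the recursion depth is bounded by the number of edges removed or contracted along a path. For delay, each recursion path between consecutive leaves does $\Oh(m)$ work in total, because the graphs along a root-to-leaf path shrink geometrically in excess. To be safe, I would also buffer outputs with a standard output queue so that the worst-case gap is $\Oh(m)$.
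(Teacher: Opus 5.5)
\textbf{Gap in Steps 2--3.} The heart of the theorem is the $\Oh(N)$ time bound, and these steps do not establish it. A binary split-and-retrim recursion in which each branch keeps a constant fraction of the excess is essentially Uno's trimming-and-balancing scheme. Its analysis gives only $\Oh(n+m+N\log n)$, and your potential does not remove the logarithm. Since $N_v=N_{v_1}+N_{v_2}$, the potential $c(N'-\mathrm{exc})$ with $\mathrm{exc}=m'-n'$ drops at a node $v$ by exactly $c(\mathrm{exc}(v_1)+\mathrm{exc}(v_2)-\mathrm{exc}(v))$. It therefore pays the $\Theta(m'_v)$ trimming cost only if the children's excesses sum to $\mathrm{exc}(v)+\Omega(m'_v)$. ``Each child keeps a constant fraction'' does not give this, and the condition genuinely fails:
\begin{itemize}
\item On a bidirected path $u_1,\dots,u_k$ with extra edges $(r,u_1)$ and $(r,u_k)$, one has $N'=\mathrm{exc}+2$, and both children of any split are again of this form. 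So every split has drop $-2c$.
\item Now add edges $(u_j,u_k)$ for $k-s\le j\le k-2$, with $s=\Theta(\sqrt{ck})$. The graph stays trimmed and has $N'\approx k+s^2/2$, which puts it in your large case.
\item Yet every choice of $e$ gives $\mathrm{exc}(G/e)+\mathrm{exc}(G-e)-\mathrm{exc}(G)=\Oh(s)$, against a trimming cost of $\Theta(k)$.
\end{itemize}
Beyond this, the small-case characterization and the $\Oh(m'+N')$ direct sweep are only asserted, not proved. Step~1's argument is also wrong as stated: an edge $(u,v)$ with $v$ an ancestor of $u$ in the fixed tree cannot be exchanged in. The $\Omega(m')$ lower bound is true, but needs e.g.\ two edge-disjoint arborescences.

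\textbf{What the paper does instead.} The missing idea is to reshape the recursion so that re-trimming is not paid per binary split. Each call reports one arborescence $A$ immediately and then recurses on $G_i$, obtained by contracting $a_1,\dots,a_{i-1}$ and deleting $a_i$. This makes $\sum_{\text{calls}} n(G)=\Oh(N)$ automatically. A call may then spend $\Oh(n(G)m(G))$ whenever $G$ has at least $m(G)^4$ arborescences, charged via $\sum_j 1/j^2$ over ancestors. Otherwise $m(G)=\Oh(n(G))$ and the nodes split into $\polylog$ chains. An emulation graph then outputs each $NT(G_i)$ in $\Oh(|C|+\polylog+|NT(G_i)|)$ time. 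Three further ingredients complete the bound:
\begin{itemize}
\item $NT(G_i)=NT(G_{i+1})$ for uninteresting $i$;
\item a fat node absorbs the $|C|^2$ term;
\item an offline union-find batching builds all the trimmed instances.
\end{itemize}
Your plan contains none of this, nor a substitute.

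\textbf{Step 4 inherits the gap.} Your balancing rule only lower-bounds the children's excess, so it does not make the excess shrink geometrically along root-to-leaf paths. Buffering outputs to hide large gaps would need space beyond $\Oh(n+m)$ unless the gaps are already small. The paper obtains $\Oh(m)$ delay directly: each call reports after $\Oh(m(G))$ work, and its last child has at most two nodes.
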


\paragraph{Overview of the new algorithm.} Our algorithm is recursive.
The input to each recursive call is a graph in which every edge is nontrivial: it does
belong to some arborescence
but does not belong to all of them. We call such a graph trimmed. It is convenient
to assume that the graph does not contain parallel edges, which can be ensured
without losing generality (but with some extra technical details that we omit in this description).

The high-level idea behind the recursive procedure is to first choose one arborescence $A$
and immediately report
it as the next solution. Next, we split the remaining arborescences with respect to the prefix
of $A$ that they share. More formally, let $A=\{a_{1},a_{2},\ldots,a_{n-1}\}$, where
the edges $a_{i}$ are sorted by the preorder numbers of their heads. Then, we first recurse to report
arborescences that do not use $a_{1}$. Then, we recurse to report arborescences
that use $a_{1}$ but do not use $a_{2}$, and so on. In the general $i$-th step,
we want to report arborescences that use edges $a_{1},a_{2},\ldots,a_{i-1}$ but do not
use edge $a_{i}$. This can be implemented by creating a new graph $G_{i}$
in which we remove $a_{i}$ and contract $a_{1},a_{2},\ldots,a_{i-1}$. Using standard
tools (namely, dominator tree) the new graph $G_{i}$ can be built in $\Oh(m)$ time,
including making it trimmed. This requires detecting which edges do not appear in
any arborescence and removing them, and then detecting which edges appear in
every arborescence and contracting them.

It is not clear if we can afford to spend $\Oh(m)$ time for every $i$. Our first observation
is that, if the current graph $G$ has many arborescences, namely at least $m^{4}$, then
in fact this is fast enough. This is proven by distributing the time among all the arborescences
reported in the subsequent recursive calls.
The remaining case is that $G$ has less than $m^{4}$ arborescences. Then, it does not
seem that we can spend that much time for every $i$, so we need to design a separate
method for handling such graphs. An additional technical difficulty is that
we need to effectively distinguish between graphs with few and many arborescences,
which might require estimating their number.

Our main conceptual contribution is a purely combinatorial characterization of graphs
with less than $m^{4}$ arborescences. First, we prove that such graphs are sparse:
$m=\Oh(n)$.
Next, we prove that their nodes can be partitioned
into few (in fact, $\polylog$\footnote{Here, $\tilde\Oh(.)$ hides factors polylogarithmic
in the size of the considered graph, which in this case is $m(G)=\Oh(n(G))$.}) chains. A chain is a sequence of nodes $u_{1},u_{2},\ldots,u_{k}$
such that we have edges $(u_{i},u_{i+1})$ and $(u_{i+1},u_{i})$, for every $i=1,2,\ldots,k-1$,
but no other edges incoming into $u_{2},u_{3},\ldots,u_{k-1}$.
Intuitively, this allows us to ``compress'' $G$ into a smaller graph $H$ in which each
edge corresponds to a subset of the original edges. We call $H$ the emulator of $G$.

Now, our approach is as follows. We try to apply the above characterization and
obtain such $H$. If this does not succeed then we know that $G$ has at least $m^{4}$
arborescences, and we can proceed more or less naively. Otherwise, we obtain $H$.
An additional complication is that in the $i$-th recursive call we need to work with $G_{i}$
and not $G$. However, we show that we can effectively update $H$ to obtain an emulator
$H_{i}$ of $G_i$. The main property of $H_{i}$ is that it allows us to extract the nontrivial
edges of $G_{i}$ in time proportional to their number. We show that this is in fact
fast enough. An additional difficulty is that this only gives us the nontrivial edges of each
$G_{i}$, but there might be also multiple other edges that are later contracted as they appear
in every arborescence of $G_{i}$. We show that, as we consider $i=1,2,\ldots,n-1$,
the total number of changes to this set of contracted edges is only $\Oh(n)$ overall.
The next difficulty is that the contractions change the endpoints of the nontrivial edges of $G_{i}$.
We design an efficient method of simulating this
in time proportional to the number of nontrivial edges of each $G_{i}$. This requires
processing all $i$ together, as we need to use a certain data structure on a tree that
in general does not admit a constant time implementation (but does admit such an
implementation in the offline variant).

Some further complications arise if we want to keep the space complexity $\Oh(m)$.
In particular, we no longer can process all $i$ together, and instead batch them into
groups with instances of roughly $\Oh(n)$ total size. Also, in some case
we cannot afford to maintain the whole internal state of the algorithm during the
subsequent recursive calls, and need to recompute it instead after returning from
the recursion.

\paragraph{Computational model.} We describe our algorithm in the standard Word RAM model, see e.g.~\cite{Hagerup98}.
We assume that $m$ and $n$ fit in a single machine word, and basic arithmetical operations on numbers that fit in a constant
number of machine words take constant time. In particular, constant-time indirect addressing is available, allowing us
to implement arrays with constant lookup time. We stress that the number of all directed spanning trees $N$, possibly
even exponential in $m$, is not assumed to fit in a single machine word.

\section{Preliminaries}

\paragraph{Rooted graphs. } We consider directed graphs with a distinguished root node
$r$, usually denoted as $G=\langle V,E,r\rangle$,
where $V=\{1,2,\ldots,n\}$ is the set of nodes, and
$E=\{(u_{1},v_{1}),(u_{2},v_{2}),\ldots,(u_{m},v_{m})\}$ is the multiset of
edges, with $u_{i},v_{i}\in V$, for $i=1,2,\ldots,m$. We explicitly allow
parallel edges, but there are no loops in the graph. Edge $e = (u,v)$ is an
atomic object that provides access to its tail $u$ and head $v$, denoted by $t(e)$ and $h(e)$, respectively.
We assume the following graph representation: we have two arrays $\mathsf{in}[]$ and $\mathsf{out}[]$,
$\mathsf{out}[u]$ is a list of edges outgoing from $u$, while $\mathsf{in}[v]$ is a list of edges
incoming into $v$. During the execution of the algorithm, we may modify the graph by contracting an edge $e=(u,v)$.
We think that $v$ is removed from the graph, and every tail of every edge $e'$ outgoing from $v$ should be
changed to $u$ (the edges incoming into $v$ will be always removed). We will think that this modifies $e'$
instead of creating a new edge.
For a graph $G$, we write $n(G)$ and $m(G)$ to denote its number of nodes and edges, respectively.
$\G$ will always refer to the original input graph, and we will (usually) write $n$ and $m$ to denote
$n(\G)$ and $m(\G)$, respectively.

\paragraph{Dominators. } For a rooted graph $G=\langle V,E,r\rangle$, we say
that a node $u \in V$ dominates a node $v\in V$ when every path from $r$
to $v$ must go through $u$. It is well-known that one can define a tree $T(G)$ on the nodes of $G$,
called the dominator tree, such that $u$ dominates $v$
if and only if $u$ is an ancestor of $v$ in $T(G)$, see e.g.~\cite{DBLP:journals/toplas/LengauerT79}.
In particular, $T(G)$ is rooted at $r$.
Further, such a tree can be found in linear time.

\begin{lemma}[\cite{BuchsbaumGKRTW08}]
\label{lem:dom}
Given a graph $G$, we can construct $T(G)$ in linear time.
\end{lemma}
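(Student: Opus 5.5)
This is a known result, and the plan is to reconstruct the linear-time dominator algorithm of Buchsbaum et al.\ on top of the Lengauer--Tarjan framework~\cite{DBLP:journals/toplas/LengauerT79}. First, restrict $G$ to the nodes reachable from $r$ and run a DFS from $r$. This gives a DFS tree $D$ and preorder numbers, and we identify every node with its preorder number. For every $w\neq r$, define the semidominator
\[
\mathrm{sdom}(w)=\min\{v : \text{there is a path } v=v_0,v_1,\ldots,v_k=w \text{ with } v_i>w \text{ for } 0<i<k\}.
\]
The two Lengauer--Tarjan lemmas then reduce the problem to evaluations over paths of $D$.

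\textbf{Semidominator lemma.} We have $\mathrm{sdom}(w)=\min$ over the edges $(v,w)$ of $v$ if $v<w$. If instead $v>w$, the edge contributes the minimum of $\mathrm{sdom}(u)$ over ancestors $u$ of $v$ in $D$ with $u>w$.

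\textbf{Dominator lemma.} Let $u$ be the node of minimum $\mathrm{sdom}$ on the $D$-path from $\mathrm{sdom}(w)$ (exclusive) to $w$. Then $\mathrm{idom}(w)=\mathrm{sdom}(w)$ if $\mathrm{sdom}(u)=\mathrm{sdom}(w)$, and $\mathrm{idom}(w)=\mathrm{idom}(u)$ otherwise.

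These lemmas are proved by the standard path-splicing arguments. With them, the whole computation processes nodes in decreasing preorder. The only nonlinear part is a sequence of $\Oh(m)$ operations of two kinds. A $\mathsf{link}$ adds a $D$-edge to a growing forest. An $\mathsf{eval}$ returns the minimum label on the forest path from a node up to its current root.

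The second step makes these operations linear, and I expect it to be the main obstacle. Path compression with balanced linking already gives $\Oh(m\alpha(m,n))$. To reach $\Oh(m)$, I would follow the Gabow--Tarjan style of microtree decomposition. Partition $D$ into microtrees of size $\Theta(\log n)$ (or $\log\log n$) plus a macro tree on their roots. Inside a microtree, precompute answers to link/eval over all possible states using word-level tables. On the macro tree, which has only $\Oh(n/\log n)$ nodes, ordinary path compression suffices, because its $\alpha$ overhead is absorbed.

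One subtlety is that $\mathsf{eval}$ is not a pure union--find query, since the labels $\mathrm{sdom}$ are minimized along paths. Buchsbaum et al.\ handle this in two ways. Some evaluations are deferred and answered offline, as nearest-common-ancestor queries or via an offline topological trick. For the remaining queries, they ensure that each one either stays inside a microtree or jumps to a macro node whose compressed value already incorporates the labels.

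Finally, the dominator lemma yields $\mathrm{idom}$ in one more increasing-preorder pass in $\Oh(n)$ time, which produces $T(G)$. All other steps (DFS, bucket lists of nodes sharing a semidominator, table construction) are clearly $\Oh(n+m)$. The total running time is therefore linear, as claimed.
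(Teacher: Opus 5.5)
The paper does not prove this lemma; it imports the result from Buchsbaum et al., so a citation is all that is needed. As a self-contained proof, your Lengauer--Tarjan part is fine: the definition of $\mathrm{sdom}$, both characterisations, and the reduction to $\Oh(m)$ $\mathsf{link}$/$\mathsf{eval}$ operations in decreasing preorder followed by one increasing-preorder pass. The step that actually gives linear time, however, is asserted rather than proved, and two of your concrete claims do not work as stated.

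First, ``ordinary path compression suffices on the macro tree because its $\alpha$ overhead is absorbed'' mixes up two bounds. The fact that $\alpha(m',n')=\Oh(1)$ once $m'\geq n'\log n'$ is a statement about path compression \emph{with balanced linking}. Path compression with the links dictated by the preorder is only bounded by $\Oh(m'\log_{1+m'/n'}n')$. For $n'=n/\log n$ and $m'=\Theta(n)$ this is $\Oh(n\log n/\log\log n)$, not $\Oh(n)$. Switching to balanced linking would require labels on macro edges (minima over intra-microtree paths). It would also need a linking schedule that keeps those labels valid while the microtrees at both ends of a query path are only partially linked. You set up neither. The cited paper needs a separate, refined analysis of path compression, in which the compressions favour certain nodes, precisely to get past this point.

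Second, tabulating all microtree states works in Gabow--Tarjan static-tree union--find because a state there is just the set of linked nodes. Here $\mathsf{eval}$ returns a minimum of $\mathrm{sdom}$ labels, which are arbitrary values in $\{1,\ldots,n\}$ produced online. A microtree state is therefore not a small object unless you maintain the relative order of the labels inside every microtree. Doing that in $\Oh(1)$ amortised time per label is a problem you do not address.

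The remaining sentences, about deferred nearest-common-ancestor queries and an ``offline topological trick'', describe what Buchsbaum et al.\ do rather than argue it. So at the crucial step your proposal rests on exactly the citation the paper uses. That is perfectly adequate for this paper. It should then be presented as a pointer to the literature, with the Lengauer--Tarjan reduction as optional background, not as a proof that the construction runs in linear time.
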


\paragraph{Directed spanning trees. } A directed spanning tree of $G=\langle V,E,r\rangle$,
or an arborescence for short,
is a set of $n-1$ edges $F\subseteq E$ with the property that any node of $G$
can be reached from $r$ using only the edges of $F$.
We will sometime need the notion of a partial arborescence, which is a set
of $k-1$ edges $F\subseteq E$, such that
$k$ nodes of $V$ can be reached from $r$ using only the edges of $F$.
We will always assume that every node of $G$ is reachable from $r$,
hence there exists at least one arborescence.

\paragraph{Enumerating arborescences. } We use $\Ah(G)$ to denote the set
of all arborescences of $G=\langle V,E,r\rangle$, and define $N=|\Ah(\G)|$.
Our goal is to list all elements of $\Ah(\G)$. If we were to explicitly output each
arborescence as a set of edges, the size of the output would
necessarily be $\Theta(Nn)$. The standard way to circumvent this limitation
is to only output the difference between each two consecutive arborescences.
More formally, let the arborescences be
$A_{1},A_{2},\ldots A_{N}$. Then we explicitly report $A_1$, and then, for each $i=1,2,\ldots N-1$, we only report
which edges of $A_{i}$ are not in $A_{i+1}$, and which edges of $A_{i+1}$ are not in $A_{i}$. 
To this end, we use the following low-level interface.
Let $\A$ denote the current set of edges that will eventually form the next arborescence, initially empty.
It is convenient to think that $\A$ is a global variable. 
The algorithm repeatedly calls the following atomic operations:
\ADD{$e$} (to add $e$ to $\A$), \REMOVE{$e$} (to remove $e$ from $\A$),
and \REPORT{} (to declare that the current $\A$ is the next arborescence).
We overload both \ADD{} and \REMOVE{} to take set of edges as an argument, so that
they  sequentially apply the corresponding operation on each element of the set. 

\paragraph{Radix sorting. } In the Word RAM model of computation, we can lexicographically
sort $s$ pairs of numbers $(x_{i},y_{i})$, with $x_{i},y_{i} \in \{1,2,\ldots,n\}$,
in $\Oh(n+s)$ time.

\section{Tools and Algorithm Overview}

The overall structure of our enumeration algorithm is recursive. The input to every
recursive call is a graph $G$ obtained from $\G$ by removing and contracting 
some edges. For the removed edges, we have already committed to not including
them in the arborescence. For the contracted edges, ideally we would like to have already
committed to including them in the arborescence, and added them to $\A$.
However, due to the presence of parallel edges this will be a bit more involved,
and explained in detail later in this section.

The main recursive procedure, called \Recurse{}, expects the input graph $G$ to have
two additional properties.
First, for each edge $e\in E$, there exists an arborescence  $A\in \Ah(G)$ such that $e\in A$,
and there also exists an arborescence $A'\in \Ah(G)$ such that $e\notin A'$. We call such
a graph \emph{trimmed}. Second, ideally there should be no parallel edges. 
However, this turns out to be somewhat problematic to maintain throughout the whole
algorithm, so instead we settle for a weaker condition: there should be at most
constantly many (concretely: at most two) copies of each edge. We call such a graph
\emph{flat}. We first explain how to reduce the general case to that of a
trimmed and flat graph.

\paragraph{Trimming. } Consider an edge $e\in G$. We have three possibilities:
there is no $A\in \Ah(G)$ such that $e\in A$ (we call such an edge \emph{useless}),
for every $A\in \Ah(G)$ we have $e\in A$ (we call such an edge \emph{forced}),
and finally there exists an arborescence  $A\in \Ah(G)$ such that $e\in A$,
and there also exists an arborescence $A'\in \Ah(G)$ such that $e\notin A'$
(we call such an edge \emph{nontrivial}). We denote the corresponding sets
of edges $U(G)$, $F(G)$, and $NT(G)$, respectively. Note that, because we only
work with graphs admitting at least one arborescence, these sets are pairwise
disjoint and form a partition of $E$. We call $G$ trimmed when $E=NT(G)$.
As observed in prior work, useless and nontrivial edges can be characterized
as follows.

\begin{proposition}[\cite{uno1998new}]
\label{prop:state}
An edge $e=(u,v)$ is non-useless if and only if there exists a simple path from
$r$ to $u$ not including $v$. Further, it is nontrivial if there exists another
non-useless edge with the same head.
\end{proposition}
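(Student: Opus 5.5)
We prove the two parts separately. The working characterization of arborescences throughout is this: a set $F\subseteq E$ is an arborescence exactly when every node $v\neq r$ has exactly one incoming edge in $F$ and $F$ contains no cycle. Equivalently, following parent edges backwards from any node reaches $r$.

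\emph{Non-useless edges.} Suppose first that $e=(u,v)$ lies in some arborescence $A$. The unique path in $A$ from $r$ to $u$ is simple. It cannot pass through $v$: otherwise $v$ would be an ancestor of $u$ in $A$, and the edge $(u,v)\in A$ would close a cycle. So this path is a simple path from $r$ to $u$ avoiding $v$.

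Conversely, let $P$ be a simple path from $r$ to $u$ that does not contain $v$. We build an arborescence through $e$ in three steps.
\begin{enumerate}
\item Start with $F=P\cup\{e\}$. It is a partial arborescence: a directed path from $r$ through $u$ ending at the new node $v$, with every node having in-degree one and no cycles.
\item Extend $F$ greedily. While some node is not reached by $F$, note that every node is reachable from $r$ in $G$, so some edge leaves the reached set into an unreached node. Add such an edge.
\item Each added edge has a fresh head and a tail already reached. Hence in-degrees stay at most one, no cycle is created, and the process ends with $n-1$ edges reaching all nodes.
\end{enumerate}
The result is an arborescence containing $e$, so $e$ is non-useless.

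\emph{Nontriviality.} Let $e=(u,v)$ be non-useless, and suppose another non-useless edge $e'$ has the same head $v$. By the first part, some arborescence $A'$ contains $e'$. Every arborescence has exactly one edge entering $v$, so $e\notin A'$. Together with an arborescence $A$ containing $e$, this shows $e$ is nontrivial.

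For the converse, which also holds, suppose $e$ is the only non-useless edge entering $v$. Every arborescence contains some edge entering $v$, and that edge is non-useless by definition, so it must be $e$. Hence $e$ is forced. So in fact "if and only if" holds in the second sentence as well.

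The only step needing care is the converse direction of the first part. There, the simplicity of $P$ and the fact that $P$ avoids $v$ are exactly what guarantee that $P\cup\{e\}$ is acyclic with in-degrees at most one.
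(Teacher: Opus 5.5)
Your proof is correct. The paper does not prove this proposition; it only cites Uno, so there is no in-paper argument to compare against. What you give is the standard self-contained proof:
\begin{itemize}
\item for necessity, you take the tree path to $u$ and note it cannot contain $v$ without closing a cycle;
\item for sufficiency, you start from the path $P\cup\{e\}$ and extend it greedily;
\item for the second part, you use that every arborescence has exactly one edge entering each non-root node.
\end{itemize}
Keep your extra observation that the second sentence is actually an equivalence. The proof of Lemma~\ref{lem:state} computes $NT(G)$ by scanning for non-useless edges that share a head with another non-useless edge, and declares every other non-useless edge forced. That step relies on exactly the converse you prove, which the statement as written (``if'') does not supply. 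The only related reasoning in the paper is also in that lemma: it recasts the path condition as ``$v$ is not an ancestor of $u$ in the dominator tree'' and remarks that simplicity of the path is immaterial. That is an algorithmic test for your first part, not a proof of it.
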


\begin{lemma}
\label{lem:state}
Given a graph $G$, we can determine $U(G)$, $F(G)$, and $NT(G)$ in linear time.
\end{lemma}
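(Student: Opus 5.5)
We will derive the lemma from Proposition~\ref{prop:state} together with the linear-time dominator tree construction of Lemma~\ref{lem:dom}. The plan is to first decide usefulness of every edge, and then classify the non-useless edges into forced and nontrivial purely by counting them per head.

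\textbf{Step 1: the useless edges.} By Proposition~\ref{prop:state}, an edge $e=(u,v)$ is non-useless iff some simple path from $r$ to $u$ avoids $v$. We will restate this as a condition on dominators.
\begin{itemize}
\item Every node is reachable from $r$, so a path from $r$ to $u$ exists; shortcutting cycles makes it simple.
\item If $v=r$, every path from $r$ contains $r$, so edges entering $r$ are useless. This is consistent with the fact that no arborescence uses an edge into the root.
\item If $v\neq r$ and $v=u$, this would be a loop, which is excluded.
\item Otherwise, a simple $r$-to-$u$ path avoiding $v$ exists iff some $r$-to-$u$ path avoids $v$ (shortcutting keeps it avoiding $v$). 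That holds iff $v$ does not dominate $u$.
\end{itemize}
Hence $e=(u,v)$ is useless iff $v$ is an ancestor of $u$ in $T(G)$, counting $v=r$ as an ancestor of everything, and also counting the case $u=v$ were loops present.

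To test this we build $T(G)$ in linear time by Lemma~\ref{lem:dom}. We then compute preorder entry and exit times of $T(G)$ by one DFS of the tree. The ancestor test then reduces to checking $\mathrm{pre}(v)\le \mathrm{pre}(u)$ and $\mathrm{post}(u)\le\mathrm{post}(v)$, which is $\Oh(1)$ per edge.

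\textbf{Step 2: forced versus nontrivial.} For each head $v$ we count the non-useless edges in $\mathsf{in}[v]$, in total $\Oh(n+m)$ time. Proposition~\ref{prop:state} says a non-useless edge is nontrivial if another non-useless edge shares its head. Conversely, suppose $e$ is the unique non-useless edge into $v\neq r$. Since at least one arborescence exists and every arborescence uses exactly one edge into $v$, which must be non-useless, every arborescence contains $e$, so $e$ is forced. Thus a non-useless edge lies in $F(G)$ when its head has count one and in $NT(G)$ otherwise.

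The only delicate point is the translation in Step~1, from the existence of a simple path avoiding $v$ to non-domination, in particular the boundary cases $v=r$ and $u=v$. Once that is settled, everything reduces to one dominator tree computation plus linear scans, giving linear time overall.
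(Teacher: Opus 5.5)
Your proposal is correct and follows essentially the same route as the paper: build $T(G)$ via \Cref{lem:dom}, use pre- and post-order numbers to test in $\Oh(1)$ per edge whether $v$ is an ancestor of $u$, which decides usefulness through \Cref{prop:state}, and then separate forced from nontrivial edges by counting non-useless edges per head. You also prove that a unique non-useless edge into $v$ is forced, which the paper leaves implicit when it assigns all remaining edges to $F(G)$; this is a small, welcome addition.
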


\begin{proof}
We need to check, for every edge $e=(u,v)$, if there exists a simple path from
$r$ to $u$ not including $v$. With such information in hand, we can
apply the first part of  \Cref{prop:state} to determine $U(G)$, and then use the second
part of \Cref{prop:state} to determine $NT(G)$ by scanning the whole graph in linear
time. Every edge that does not belong to $U(G)$ nor $NT(G)$ is in $F(G)$.

We apply \Cref{lem:dom} to construct the dominator tree $T(G)$. In $T(G)$,
a node $v$ is an ancestor of a node $u$ if and only if every path from $r$ to
$u$ goes through $v$. We observe that there is a path from $r$ to $u$ that
does not go through $v$ if and only if there is a simple path from $r$ to $u$
that does not go through $v$. Thus, for an edge $e=(u,v)$, there
exists a simple path from $r$ to $u$ not including $v$ if and only if
$v$ is not an ancestor of $u$ in $T(G)$. This condition can be checked
for every edge in constant time after assigning pre- and post-order numbers
to the nodes of $T(G)$ in linear time, allowing us to determine all three
sets in total linear time.
\end{proof}

To reduce the general case to that of a trimmed graph $G$, we proceed as follows.
First, we apply \Cref{lem:state} to determine $U(G)$, $F(G)$, and $NT(G)$ in linear time.
The useless edges can be simply removed from the graph. For every edge $e\in F(G)$,
we know that every arborescence that should be reported contains $e$.
We add every such edge $e$ to $\A$ and modify $G$ as follows.
We \emph{contract} $e=(u,v)$: for every edge $e'=(v,w)$, we update the tail of $e$
to $u$, and then we remove $v$ and all of its incoming edges (in this case,
this is in fact only $e$, as we have previously removed all useless edges).
All forced edges can be contracted simultaneously in linear time by first considering
the connected components consisting of the forced edges, and observing that each
of them is a rooted tree. For every node in such a component, we can determine
the root of its tree in linear total time by a simple traversal. This gives us enough
information to update the tail of each remaining edge in another scan over the graph.
Finally, as we have removed some nodes of the graph, we need to relabel the
remaining nodes to ensure that their identifiers are $\{1,2,\ldots,n'\}$, where $n'$
is their number. This can be also easily done in linear time, and we
obtain a trimmed graph $G'=\Trim{G}$, such that every arborescence $A\in \Ah(G)$
corresponds to an arborescence $A'\in \Ah(G')$ such that $A = A' \cup F(G)$.
Thus, as we have added all edges in $F(G)$ to $\A$, enumerating $\Ah(G)$
reduces to enumerating $\Ah(G')$. After this has been done, we have to remove the
edges in $F(G)$ from $\A$. We summarise this in the following lemma.

\begin{lemma}
\label{lem:trim}
Given a graph $G$, we can reduce in linear time enumerating $\Ah(G)$
to enumerating $\Ah(G')$, where $G'$ is a trimmed graph containing
only the nontrivial edges of $G$.
\end{lemma}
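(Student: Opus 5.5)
The plan is to make rigorous the construction described just before the statement, namely $G'=\Trim{G}$. There are three parts: computing the partition in linear time, carrying out the transformation in linear time, and proving that arborescences of $G$ correspond bijectively to arborescences of $G'$ via $A=A'\cup F(G)$. First, I apply \Cref{lem:state} to obtain $U(G)$, $F(G)$ and $NT(G)$ in linear time. Removing $U(G)$ does not change $\Ah(G)$, since no arborescence uses a useless edge. After this removal, the edges of $F(G)$ are still forced and the remaining edges are still nontrivial.

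Second, I contract all forced edges simultaneously. Let $v\neq r$ be a node with a forced incoming edge $e$. Then $e$ is the only non-useless edge into $v$, because any other non-useless edge into $v$ would make $e$ nontrivial by the second part of \Cref{prop:state}. So every node has in-degree at most one in $F(G)$. Since forced edges lie in some arborescence, they contain no cycle, so $F(G)$ is a forest of rooted out-trees. A single traversal gives each node $w$ the root $\rho(w)$ of its tree. I then replace every remaining edge $(x,y)$ by $(\rho(x),y)$; its head $y$ is a tree root, because $y$ has no forced incoming edge, since otherwise that edge would be its only non-useless incoming edge. I delete the non-root nodes and relabel the survivors to $\{1,\dots,n'\}$ with one array pass. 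Before recursing I call \ADD{$F(G)$}, and after returning I call \REMOVE{$F(G)$}. All of this takes $\Oh(n+m)$ time.

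Third, and most delicate, I verify the correspondence. Every arborescence contains $F(G)$ and avoids $U(G)$. The map $A\mapsto A\setminus F(G)$ sends $\Ah(G)$ to sets of $n'-1$ edges of $G'$, since each non-root node of $G'$ receives exactly one edge. Reachability is preserved because each contracted tree is entered only at its root and spans its nodes via $F(G)$, so reachability in $G'$ corresponds exactly to reachability of tree roots in $G$. Conversely, for $A'\in\Ah(G')$, the set $A'\cup F(G)$ has $n-1$ edges and reaches every node. This gives a bijection. It remains to check that $G'$ is trimmed, i.e.\ $E(G')=NT(G')$. Each remaining edge lies in some but not all arborescences of $G$, and the bijection transfers this property to $G'$. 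I expect the main care to be needed in the head-is-a-root argument and in this transfer of nontriviality; the linear-time bounds are routine.
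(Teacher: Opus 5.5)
Your proposal is correct and is essentially the paper's own proof. You compute $U(G)$, $F(G)$, $NT(G)$ via \Cref{lem:state}, delete the useless edges, contract the forest of forced edges by redirecting each tail to the root of its tree, relabel, and bracket the recursion with adding and removing $F(G)$ from \A. The extra justifications you give are the ones the paper leaves implicit: that $F(G)$ is a forest of out-trees whose roots include all heads of nontrivial edges, and that $A\mapsto A\setminus F(G)$ is a bijection $\Ah(G)\to\Ah(G')$ preserving nontriviality. These also show no loop can arise after contraction, since a nontrivial edge whose head is a forced ancestor of its tail would close a cycle in any arborescence containing it.
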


\paragraph{Flattening. }
We do allow for parallel edges in the original input graph $\G$. However, it
would be convenient to assume that the graph is simple (for example, such
an assumption is crucial in Uno's approach~\cite{uno1998new}).
Consider a group of parallel edges connecting $u$ to $v$, denoted
$e_{1},e_{2},\ldots,e_{k}$. It would be natural to replace them
with a single new edge $e$ that stores a list of the original $k$
edges. Then, whenever the enumeration algorithm attempts to call $\ADD{e}$, we
would actually branch over the $k$ possibilities. This is somewhat
inconvenient to analyse, though, and we will proceed differently.
We will still merge parallel edges into a single edge that maintains their list.
This process can be repeated recursively, so a physical representation of an edge $e$
is a tree in which every leaf corresponds to an edge $\e$ of $\G$.
We think that such an $e$ is a set of edges of $\G$, and write $|e|$ to denote their number.
Whenever the enumeration algorithms attempts to call $\ADD{e}$,
we check if $|e|=1$. If so, we simply add $e$ to $\A$.
Otherwise, we add $e$ to another set $\A'$. Finally,
when the enumeration algorithm attempts to call $\REPORT$,
we recursively iterate over all the possibilities of choosing exactly one edge $\e\in e$
from each $e \in \A'$ by calling $\IterateSolutions$ implemented as follows.

\begin{algorithm}[H]
    \caption{\IterateSolutions}

    \uIf{\textnormal{$\A'$} is empty}
    {
        \REPORT{}\;
    }
    \Else{
        $e \gets \A'\text{.top()}$\;
        $\A'\text{.pop()}$\;
        \ForEach{$\textnormal{\e} \in e$}{
            \ADD{\e}\;
            \IterateSolutions\;
            \REMOVE{\e}\;
        }
        $\A'\text{.push}(e)$\;
    }
    \Return
\end{algorithm}

Because we only include $e$ in $\A'$ when $|e| \geq 2$,
it is immediate to verify that the total time spent in the above recursion
is only constant per each reported arborescence.

It would be natural to merge every maximal group of parallel edges into one.
However, it can be seen that this might create a forced edge, even if we start from
a trimmed graph. This is undesirable, as we want to ensure that the graph
is both trimmed and flat. Therefore, for every such group consisting
of edges $e_{1},e_{2},\ldots,e_{k}$, with $k\geq 2$, we merge only $e_{2},e_{3},\ldots,e_{k}$.
We claim that, given a trimmed graph $G$, this allows us to obtain in linear time
a new graph $G'$ that is both trimmed and flat. To see that the new graph does not contain
any forced edge, consider a simple path $p$ in $G$ ending with an edge $e$ and witnessing
that an edge $f$ with $h(e)=h(f)$ is not forced. Every edge of
$p$ except for $e$ can be assumed to be the first edge in its group,
so it still exists in $G'$. Further, if $e$ and $f$ are not parallel then
$e$ can be also assumed to be the first edge in its group, and otherwise
it can be assumed to either the first or the second edge in its group.
In either case, $G'$ still contains $p$, witnessing that $f$ (if it still exists there)
is not forced in $G'$. We summarize this in the following lemma.

\begin{lemma}
\label{lem:flat}
Given a trimmed graph $G$, we can reduce in linear time enumerating $\Ah(G)$
to enumerating $\Ah(G')$, where $G'$ is both trimmed and flat.
\end{lemma}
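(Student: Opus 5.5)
The plan is to make the informal argument preceding the statement precise. There are three things to check: the reduction is a bijection at the level of reported solutions, it runs in linear time, and the resulting graph $G'$ is trimmed and flat.

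\emph{Construction.} Radix sort the edges of $G$ by the pair $(t(e),h(e))$ in $\Oh(n+m)$ time, so that every maximal group of parallel edges $e_1,\ldots,e_k$ becomes contiguous. For each group with $k\geq 3$, replace $e_2,\ldots,e_k$ by a single merged edge $e$ from $t(e_1)$ to $h(e_1)$ that stores the list (tree) of the merged edges; for $k\le 2$ nothing changes. This is a single scan, so it takes linear time. Afterwards every group has at most two copies, so $G'$ is flat by definition.

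\emph{Correctness of the correspondence.} An arborescence uses at most one edge into each node, hence at most one edge of any parallel group. So every $A\in\Ah(G)$ either avoids a group, uses $e_1$, or uses exactly one of $e_2,\ldots,e_k$. This gives a bijection between $\Ah(G)$ and pairs $(A',\sigma)$, where $A'\in\Ah(G')$ and $\sigma$ picks one original edge inside each merged edge of $A'$. The \ADD{}/$\A'$/\IterateSolutions{} mechanism realizes exactly this expansion. Since only edges with $|e|\ge 2$ go to $\A'$, each branching point yields at least two solutions, so the expansion costs $\Oh(1)$ amortised per reported arborescence.

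\emph{Trimmedness.} This is the step needing most care. First, no edge of $G'$ is useless. By \Cref{prop:state}, an edge $(u,v)$ is non-useless iff there is a simple path from $r$ to $u$ avoiding $v$. This is a reachability property that depends only on which pairs of nodes are joined by at least one edge, and that set of pairs is unchanged in $G'$. Hence each surviving or merged edge inherits non-uselessness from some edge of its group in $G$.

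Second, no edge $f$ of $G'$ is forced. By \Cref{prop:state}, it suffices to exhibit another non-useless edge of $G'$ with head $h(f)$. Since $f$'s group was nontrivial in $G$, there is another non-useless edge $e$ in $G$ with $h(e)=h(f)$. If $e$ is not parallel to $f$, some edge of $e$'s group survives in $G'$, and it is non-useless by the previous paragraph. If $e$ is parallel to $f$, the group has $k\ge2$, so $G'$ contains both $e_1$ and the edge representing $e_2,\ldots,e_k$ (a single edge if $k=2$, the merged one if $k\geq 3$). These are two distinct parallel non-useless edges, so neither is forced.

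I expect this last case, merging all but the first edge so that a group never collapses to a single, possibly forced, edge, to be the only real obstacle; the rest is routine bookkeeping.
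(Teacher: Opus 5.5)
Your proposal is correct and follows essentially the paper's argument: merging only $e_2,\ldots,e_k$ keeps two distinct parallel copies of any group with $k\ge 2$, and the witness that $f$ is not forced (a path to $t(e)$ avoiding $h(f)$, then an alternative edge $e$ into $h(f)$) survives in $G'$ because every group keeps a representative. You split this into ``non-uselessness depends only on which pairs are adjacent'' and ``an alternative non-useless edge into $h(f)$ survives'', which is a cleaner phrasing of the paper's surviving-path argument and also makes explicit that $G'$ has no useless edges, something the paper leaves implicit.
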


\paragraph{Algorithm overview. }
The algorithm begins with reducing the general case to that of a trimmed graph by applying \Cref{lem:trim}.
Then, it calls the main recursive procedure $\Recurse{G}$. We will always
ensure that the argument to $\Recurse{}$ is trimmed. The first step of $\Recurse{}$
is to apply \Cref{lem:flat} and to ensure that $\G$ is both trimmed and flat.
Then, the high-level idea of the approach is as follows.

We begin with finding and reporting
a single arborescence $A=\{a_1, a_2, \ldots a_{n(G)-1}\}$. We assign preorder
number to the nodes of $G$ by traversing $A$ starting from the root $r$,
and assume that the edges $a_{i}$ are listed in the increasing order of the
preorder numbers of $h(a_{i})$. Then, the remaining arborescences, i.e. $\Ah(G)\setminus\{ A\}$,
can be grouped according to the first edge $a_{i} $ they do not use. We formalize
this as follows: let $G_{i}$ be a graph obtained by contracting the prefix of $A$ consisting
of $\{a_{1},a_{2},\ldots,a_{i-1}\}$ and removing $a_{i}$. Then, for $i=1,2,\ldots,n(G)-1$,
we ensure that $\A$, compared to its state at the beginning of the current call,
additionally contains the edges of $G$ that have been contracted in $G_{i}$,
and then recurse on $G_{i}$. We cannot simply call $\Recurse{G_{i}}$, as the
input to $\Recurse{}$ must be trimmed. To ensure this, we will actually have two different
approaches. The simple case is when we can guarantee that $G$ has sufficiently
many (concretely, at least $(m(G))^{4}$) arborescences. In such a case we can
actually afford to invoke \Cref{lem:trim} in linear time.
The more complicated case is when $G$ has fewer arborescences.
We will establish that such graphs have certain structure, in particular $m(G)=\Oh(n(G))$.
This will allow us to design a mechanism for extracting $NT(G_{i})$ and obtaining $\Trim{G_i}$ in
time $\Oh(|NT(G_{i})|)$, assuming some additional bookkeeping that takes $\Oh(m(G))=\Oh(n(G))$
total time. An additional difficulty is that, before recursing on $\Trim{G_i}$, we need to add
$F(G_{i})$ to $\A$, and $F(G_{i})$ is possibly much larger than $NT(G_{i})$.
To overcome this, we will maintain in $\Oh(n(G))$ total time a set of edges $B$ (in fact, initially
an arborescence edge-disjoint from $A$) with the property that it contains all edges in $F(G_{i})$ and some
edges from $NT(G_{i})$. Then, we maintain an invariant that every edge in the current $B$ is in $\A$,
and before recursing temporarily remove from $\A$ every edge in $B\cap NT(G_{i})$,
which requires only $\Oh(NT(G_{i}))$ operations.
Finally, we restore $\A$ to its state at the beginning of the current call.
Assuming that we are able to implement each recursive call in such a way, we will obtain
the promised overall linear time complexity.

\begin{lemma}
\label{lem:recurse}
Assume that every call to $\Recurse{G}$ proceeds by first flattening $G$, then
reports $A$ and recurses on $\Trim{G_i}$, for $i=1,2,\ldots,n(G)-1$, spending
either $\Oh(n(G)m(G))$ (in such a case we are guaranteed that $|\Ah(G)| \geq (m(G))^{4}$) or $\Oh(m(G)+\sum_{i}|NT(G_{i})|)$
(in such a case we are guaranteed that $m(G)=\Oh(n(G))$).
Then, the overall time complexity is $\Oh(m+N)$.
\end{lemma}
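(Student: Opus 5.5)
The plan is to charge the work of every call to the arborescences reported inside the subtree of that call in the recursion tree. Each call to $\Recurse{G}$ reports exactly one arborescence $A$ itself, and the sets $\Ah(G_i)$ partition $\Ah(G)\setminus\{A\}$. Hence calls are in bijection with reported arborescences, there are exactly $N$ of them, and each call's subtree reports $|\Ah(G)|$ arborescences. The initial \Cref{lem:trim} costs $\Oh(m)$, and flattening (\Cref{lem:flat}) is linear in $m(G)$, so it is absorbed into either bound.

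I also use two facts about the graphs passed to $\Recurse{}$. First, $m$ strictly decreases along any root-to-leaf path: $G_i$ loses $a_i$, and trimming only removes or contracts edges. Second, in a trimmed graph every non-root node has in-degree at least $2$, so $m(G)\ge 2(n(G)-1)$.

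\textbf{Expensive case.} A call costing $\Oh(n(G)m(G))$ has $|\Ah(G)|\ge m(G)^4$. I spread its cost evenly over the arborescences of its subtree, giving each at most $\Oh(n(G)m(G)/m(G)^4)=\Oh(1/m(G)^2)$. Fix one reported arborescence. The expensive ancestors charging it have pairwise distinct values of $m(G)$, because $m$ strictly decreases along the path. So its total charge from expensive calls is at most $\sum_{k\ge1}\Oh(1/k^2)=\Oh(1)$. Summed over all arborescences, expensive calls cost $\Oh(N)$.

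\textbf{Cheap case.} A call costing $\Oh(m(G)+\sum_i|NT(G_i)|)$ satisfies $m(G)=\Oh(n(G))$. Since $m(\Trim{G_i})=|NT(G_i)|$, I charge the term $|NT(G_i)|$ to the child call on $\Trim{G_i}$ and keep $\Oh(m(G))$ at the current call. After this redistribution, every call pays $\Oh(m(\text{its input}))$: its own term, plus one term from its parent (which is $\Oh(m)$ for the root). It therefore remains to show
\[
\sum_{\text{calls }G} m(G)=\Oh(N+m),
\]
where the sum is restricted to calls whose own or parent term is attributed this way. A child $G_i$ with no arborescence is never recursed into; its cost is contained in $|NT(G_i)|$, which is $0$ or small after trimming, so charging it $\Oh(1)$ to the parent is within $\Oh(m(G))$.

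\textbf{Main obstacle.} Bounding $\sum m(G)$ is the step I expect to be hardest. My first attempt is to prove that a trimmed call with $m(G)=\Oh(n(G))$ has $\Omega(n(G))=\Omega(m(G))$ indices $i$ with $\Ah(G_i)\neq\emptyset$. Then $m(G)$ is paid by $\Oh(1)$ per nonempty child, and since calls correspond bijectively to reported arborescences, the total is $\Oh(N)$. To get these children, I pick for each non-root $v=h(a_i)$ a second incoming edge $e\neq a_i$, which exists by the in-degree bound, and check whether $A\setminus\{a_i\}\cup\{e\}$ is an arborescence; when it is, it lies in $\Ah(G_i)$. This fails exactly when $e$ is a back edge with respect to $A$. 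If too many heads admit only back edges, I would fall back to a potential argument: each trimmed graph with $m(G)$ edges has at least $m(G)-n(G)+2\ge m(G)/2$ arborescences. I would then distribute $m(G)$ as $\Oh(1)$ per arborescence of the subtree, but only across calls whose sizes decrease geometrically along paths, using the strict decrease of $m$ to control the resulting sum. Combining both cases with the initial $\Oh(m)$ trimming gives the claimed $\Oh(m+N)$.
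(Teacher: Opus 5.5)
\textbf{There is a genuine gap: the step you call the main obstacle is left open, and it is where the proof finishes.} Your handling of the expensive calls (charging $\Oh(1/m(G)^2)$ per arborescence and using that $m$ strictly decreases) is exactly what the paper does, and so is moving each $|NT(G_i)|$ to the child on $\Trim{G_i}$. But neither of your attempts at bounding $\sum m(G)$ works.

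\begin{itemize}
\item The single-edge exchange $A\setminus\{a_i\}\cup\{e\}$ fails precisely in the cheap regime. In a thin graph almost every node is an inner node of a chain. The only edge entering such a node besides $a_i$ comes from its own child in $A$, so it is a back edge, and your test finds no child for almost all $i$.
\item The potential fallback also fails. Spreading $m(G)$ over the $\Omega(m(G))$ arborescences of the subtree charges $\Oh(1)$ to an arborescence per ancestor call. Since $m$ may drop by only one per level, there is no geometric decrease to exploit, and you only get $\Oh(m)$ per arborescence.
\end{itemize}

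The missing observation is that \emph{every} $G_i$, $i=1,\ldots,n(G)-1$, has an arborescence. Since $a_i$ is not forced, some other non-useless edge $(u,v)$ enters $v=h(a_i)$. By \Cref{prop:state} there is a path from $r$ to $u$ avoiding $v$. This path survives contracting $a_1,\ldots,a_{i-1}$ and deleting $a_i$, so $v$, and then its whole $A$-subtree, are reachable in $G_i$. All other nodes are reached along $A$. Alternatively, use the edge-disjoint $B$ as in \Cref{lem:S}. What is needed is reachability, not a one-edge swap, so back edges are irrelevant.

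With this, the count closes as in the paper:
\begin{itemize}
\item every call has exactly $n(G)-1$ children, all of which are genuine calls;
\item calls inject into reported arborescences, so there are at most $N$ of them;
\item hence $\sum_{\text{calls}} n(G) = 2\cdot\#\text{calls}-1\le 2N$;
\item with $m(G)=\Oh(n(G))$ in cheap calls, this bounds $\sum m(G)$.
\end{itemize}

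\textbf{Secondary gap: flattening is not ``absorbed into either bound''.} The guarantees $|\Ah(G)|\ge m(G)^4$ and $m(G)=\Oh(n(G))$ concern the flattened graph $G'$. Flattening, however, costs time linear in the unflattened input, which can be much larger because contractions in the parent create many parallel edges. The same issue affects charging $|NT(G_i)|$ to the child, since that is the child's size before flattening. The paper pays the difference $m(G)-m(G')$ with a separate debt argument. A merged edge $e$ carries $|e|-1$ units, created when the group is merged. These are repaid by the $\prod|e|\ge\sum(|e|-1)$ arborescences that \textsc{Choose()} outputs.
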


\begin{proof}
We first reduce the general case to such that, for every call to $\Recurse{G}$, the
input graph $G$ is already flat. Consider a single such call, where a graph $G$
is initially reduced to a flat graph $G'$ in $\Oh(m(G))$ time. We split the cost
of flattening into $\Oh(m(G)-m(G'))$ and $\Oh(m(G'))$. The former is accounted for
by the following argument. Each edge $e$ of the current graph $G$ or the set $\A'$
carries $|e|-1$ units of debt. In a recursive call, every edge $e$ of the current graph
$G$ is nontrivial, so either belongs to the first arborescence $A$, or belongs to an arborescence
reported in one of the subsequent recursive calls (either by being present in $\Trim{G_{i}}$
or by being added to $\A$ or $\A'$ before the call). In either case, the debt never decreases,
and each edge is eventually present in $\A$ or $\A'$ at least once when we report new arborescences during $\IterateSolutions{}$.
Then, $\IterateSolutions{}$ produces $\prod_{e\in \A'} |e| \geq \sum_{e\in A'} |e|-1$ arborescences,
which is enough to pay off the debt. Next, when merging 
parallel edges $e_{1},e_{2},\ldots,e_{k}$ into a single edge $e$ we are allowed to increase the debt
by $|e|-1 - \sum_{i}(|e_{i}|-1) = \sum_{i} |e_{i}| - 1 - \sum_{i} (|e_{i}|-1) = k-1$, which
(when summed over all such groups of edges) is exactly $m(G)-m(G')$ as required.
Thus, in the remaining part of the proof we only need to account for the remaining cost $\Oh(m(G'))$
(and the time spent inside $\Recurse{G}$ after flattening), which effectively allows us to assume
the the input graph $G$ is already flat.

We have two types of calls $\Recurse{G}$ that will be analyzed separately.
\begin{enumerate}
\item We might spend $\Oh(n(G)m(G))=\Oh((m(G))^{2})$ time, but then we are guaranteed that $|\Ah(G)| \geq (m(G))^{4}$.
We distribute this time among the at least $(m(G))^{4}$ arborescences reported in the subsequent
recursive calls, assigning $\Oh(1/(m(G))^{2})$ time to each of them. It remains to analyse the total time
assigned to a single arborescence.
We observe that, in every subsequent recursive call to $\Recurse{}$, the number of edges
in the current graph is strictly smaller, as we remove at least one edge $a_{i}$ in $\Trim{G_{i}}$.
Thus, for a single arborescence reported by the algorithm, the total assigned time is
the sum of $\Oh(1/(m(G))^{2})$ over all of the ancestor recursive calls, which by the strict
monotonicity of $m(G)$ is bounded by $\Oh(\sum_{i}1/i^{2})=\Oh(1)$.
\item We might spend $\Oh(m(G)+\sum_{i}|NT(G_{i})|)$ time, but then we are guaranteed
that $m(G)=\Oh(n(G))$. Recall that we recurse on $G'_{i}=\Trim{G_{i}}$, for $i=1,2,\ldots,n(G)-1$,
and observe that $m(G'_{i}) = |NT(G_{i})|$. Thus, the sum of $\Oh(\sum_{i}|NT(G_{i})|)$ over
the second type calls is upper bounded by the sum of $\Oh(m(G))$ over all calls.
Thus, it remains to upper bound the sum of $\Oh(m(G))$ over all calls.
The sum of $\Oh(m(G))$ over the first type calls has been already
shown to be $|\Ah(\G)|$, because $\Oh(m(G)) = \Oh(n(G)m(G))$.
The sum of $\Oh(m(G))$ over the second type calls is upper bounded,
using the assumption that $m(G)=\Oh(n(G))$, by the sum of $\Oh(n(G))$ over all calls.
In every call, we first report an arborescence, and then branch into $n(G)-1$ subsequent
recursive calls. It is straightforward to verify that this makes the sum of $\Oh(n(G))$ over all calls
upper bounded by the total number of reported arborescences, that is, $\Oh(|\Ah(\G)|)$.
\end{enumerate}
We established that the sum, over all recursive calls, of the time spend inside $\Recurse{}$,
is $\Oh(N)$.

$\Recurse{G}$ assumes $G$ to be trimmed, so the whole algorithm, called $\TrimAndRecurse{G}$,
first uses \Cref{lem:trim} to convert $\G$ into a trimmed graph in $\Oh(n+m)$ time. Thus, the overall time complexity is as claimed.

\begin{algorithm}[H]
    \caption{\TrimAndRecurse{}}

    Determine $U(G)$, $F(G)$, $NT(G)$\;
    \ADD{$F(G)$}\;
    Construct $\Trim{G}$ from $G$ by removing the edges in $U(G)$ and contracting the edges in $F(G)$\;
    \Recurse{$\Trim{G}$}\;
    \REMOVE{$F(G)$}\;
    \Return
\end{algorithm}

\end{proof}

\section{Algorithm Details}

In this section, we describe how to implement $\Recurse{}$ to fulfill the requirements
of \Cref{lem:recurse}. The procedure begins with flattening the input graph $G$, and from now
on we assume that there are at most two copies of each edge. Further, each edge is nontrivial:
it belongs to some arborescence and does not belong to some other arborescence.

The procedure begins with finding an arborescence $A=\{a_{1},a_{2},\ldots,n(G)-1\}$.
If we are in a first type call (so $G$ contains at least $(m(G))^{4}$ arborescences) 
then no additional properties of $A$ are necessary. However, in a second type call it will
be crucial that we actually have two edge-disjoint arborescences $A$ and $B$: it will allow
us to effectively maintain a set of edges $S$, initially set to $B$, such that $F(G_{i}) \subseteq S$
holds, for every $i=1,2,\ldots,n(G)-1$. An old result of Edmonds~\cite{edmonds1973edge}
is that $G$ being trimmed implies that such edge-disjoint arborescences do exist.
While the original proof did not provide an efficient construction algorithm, later results
imply a linear time algorithm. Further, the found edge-disjoint arborescences can
be assumed to admit additional structure that will turn out to be useful later when analysing
the structure of graphs with few arborescences.

\begin{lemma}
\label{lem:twodisjoint}
Given a trimmed graph $G=(V,E)$, we can find two edge-disjoint arborescences $A,B\subseteq E$
such that, for every edge $(u,v)\in E$, $v$ is not an ancestor of $u$ in $A$ or in $B$.
\end{lemma}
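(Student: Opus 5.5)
The plan is to take $A$ and $B$ as two edge-disjoint spanning trees that are \emph{independent}: for every node $v$, the root-to-$v$ paths in $A$ and in $B$ meet only at $r$ and $v$. Such a pair yields the ancestor condition directly. Linear-time constructions of two independent spanning trees in a graph whose dominator tree is flat are known (Georgiadis and Tarjan, building on Whitty and on Plehn for flow graphs). So the argument has three steps: show that trimmed implies a flat dominator tree, invoke the independent-trees construction, and then verify edge-disjointness and the ancestor property.

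\textbf{Step 1.} We show that the dominator tree $T(G)$ is flat, meaning every $v\neq r$ has immediate dominator $r$. Suppose instead that some $u\neq r$ strictly dominates $v$. Take any edge $e=(w,v)$. Since $e$ is non-useless, \Cref{prop:state} gives a path from $r$ to $w$ avoiding $v$. Appending $e$ gives a path from $r$ to $v$, so it passes through $u$. Next we argue that $u$ dominates $w$: a path from $r$ to $w$ avoiding $u$ would, after appending $e$, give a path to $v$ avoiding $u$, which is impossible unless $w=u$. So every in-neighbour $w$ of $v$ is either $u$ or dominated by $u$.

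We would like to lift this to a contradiction with nontriviality of edges entering $u$. The cleanest route is to show flatness is \emph{exactly} the condition from \Cref{lem:state}: an edge $(w,x)$ is useless iff $x$ dominates $w$. If the idom of $v$ is $u\neq r$, consider the edges entering $v$ whose tails lie outside the dominator subtree of $v$; by nontriviality there are at least two of them. Iterating this argument may still not contradict anything. Indeed, a graph with edges $r\to u$, $r\to u$, and two parallel edges $u\to v$ is trimmed but $u$ dominates $v$. So flatness fails in general, and Step 1 must be modified.

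\textbf{Modified plan.} We work with the weaker, correct consequence of trimming, namely that no edge $(w,x)$ has $x$ dominating $w$. We then rely on the general theorem that any flow graph admits two spanning trees $A,B$ such that for every $v$ the $r$-to-$v$ paths in $A$ and $B$ intersect exactly in the dominators of $v$ (Georgiadis--Tarjan, linear time). Edge-disjointness is then obtained from Edmonds' theorem, which applies since every non-root node has in-degree at least $2$ among non-useless edges.

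\textbf{Ancestor condition.} Fix an edge $(w,x)$. If $x$ were an ancestor of $w$ in both $A$ and $B$, then $x$ would lie on both $r$-to-$w$ tree paths. By the independence property, $x$ would then be a dominator of $w$, contradicting non-uselessness of $(w,x)$. This verification is the easy part.

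\textbf{Main obstacle.} The hard part is reconciling edge-disjointness with the independence property. I would try to prove that the Georgiadis--Tarjan construction can be run so as to produce edge-disjoint trees when every node has two non-useless in-edges. Concretely, I would contract each maximal dominator-tree subtree into a strongly connected block, apply the flat-case construction inside each block, and use the parallel or alternative in-edges into each block to keep $A$ and $B$ edge-disjoint.
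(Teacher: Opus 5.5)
\textbf{Gap: the two trees are never shown to be the same pair.} Your ancestor-condition argument is the right one. If $x$ is an ancestor of $w$ in two divergent trees, then $x$ is a dominator of $w$, so $(w,x)$ would be useless; this is exactly the paper's argument. You were also right to abandon Step 1: your example with doubled edges $r\to u$ and $u\to v$ is trimmed but has $u$ dominating $v$. The problem is the one you flag yourself, and the proposal never resolves it: nothing in it produces a \emph{single} pair $A,B$ that is both divergent and edge-disjoint.

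Edmonds' theorem gives some edge-disjoint pair, and the Georgiadis--Tarjan result, as you quote it, gives some divergent pair. There is no reason these coincide. The block-contraction sketch in your last paragraph does not close the gap either. Dominator subtrees need not be strongly connected: in your own example, the subtree of $u$ is $\{u,v\}$ with only the edges $u\to v$. And gluing per-block constructions while keeping both properties is essentially the whole difficulty, which is left unproved. A smaller point: Edmonds' hypothesis is a cut condition on every set $X\not\ni r$, not in-degree $\ge 2$ at each node. It holds here only because a trimmed graph has no bridges.

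\textbf{Missing ingredient.} The paper uses a stronger guarantee of the Georgiadis--Tarjan linear-time algorithm. It outputs \emph{strongly} divergent spanning trees, which are in particular divergent, and these trees are edge-disjoint except on the bridges of the flow graph. A bridge here is an edge whose deletion makes some node unreachable from $r$. Bridges are exactly the forced edges:
\begin{itemize}
\item If deleting $e$ leaves every node reachable, then some arborescence avoids $e$, so $e$ is not forced.
\item If $e$ is a bridge, every arborescence must use $e$, so $e$ is forced.
\end{itemize}
A trimmed graph has no forced edges, so the two trees returned by that algorithm are edge-disjoint. Your ancestor argument then finishes the proof, all in linear time. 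With this guarantee you need neither Edmonds' theorem nor the contraction construction.
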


\begin{proof}
We first recall a node $u$ is a dominator of $v$ when every path from $r$ to $v$ goes through $u$.
Georgiadis and Tarjan~\cite{DBLP:journals/talg/GeorgiadisT16} introduced the notion of divergent spanning trees:
in our notation they are two arborescences $A$ and $B$ (not necessarily edge-disjoint) such that,
for every node $v$, the common ancestors of $v$ in $A$ and $B$ are exactly its dominators.
Further, they showed that such arborescences can be found in linear time. Their algorithm
in fact guarantees a stronger property: the trees are strongly divergent and they are edge-disjoint
except for the bridges (which in our notation are the forced edges). The notion of strongly divergent trees is not important for our application, it is
only important that strongly divergent trees are divergent.

We claim that running their algorithm results in finding two edge-disjoint arborescences with the desired property.
First, consider the case when an edge $(u,v)$ appears in both $A$ and $B$. Then, it is a bridge, which contradicts the graph being trimmed.
Thus, the arborescences are indeed edge-disjoint. Next, take an edge $(u,v)\in E$
and assume that $v$ is an ancestor of $u$ in both $A$ and $B$. Then, $v$ dominates $u$,
which makes $(u,v)$ useless. This again contradicts the graph being trimmed.
\end{proof}

We apply the procedure from \Cref{lem:twodisjoint} to partition $E$ into $A$, $B$, and
$C := E\setminus (A \cup B)$.
Next, we report $A$ as the next arborescence.
Then, we attempt to check if $G$ has at least $(m(G))^{4}$ arborescences.
This is somewhat problematic, though, and we need to take a more indirect route. Namely,
we will prove in \Cref{sec:thin} that a trimmed graph admitting less than $(m(G))^{4}$, called \emph{thin}
(as opposed to \emph{thick}), has a certain simple structure. Namely, we will establish that $m(G)=\Oh(n(G))$,
and the nodes of $G$ can be partitioned into few \textit{chains}. A chain $\zeta$ is a set of nodes
$\{u_1, u_2, \ldots, u_k\}$ such that $(u_{i},u_{i+1})\in A$ and $(u_{i+1},u_{i})\in B$,
for every $i=1,2,\ldots,k-1$. Further, denoting $\zeta^\circ := \zeta \setminus \{u_1, u_k\}$,
there are no other edges incoming into any of the nodes belonging to $\zeta^\circ$. See \Cref{fig:chain} for an illustration.
Further, we call a node $v$ \emph{fat} when there exist $\Theta(|C|)$ edges
of the form $(u,v)$, where $u$ is earlier than $v$ in the preorder numbering of $A$.

\begin{figure}[h]
\centering
\includegraphics[width=\textwidth]{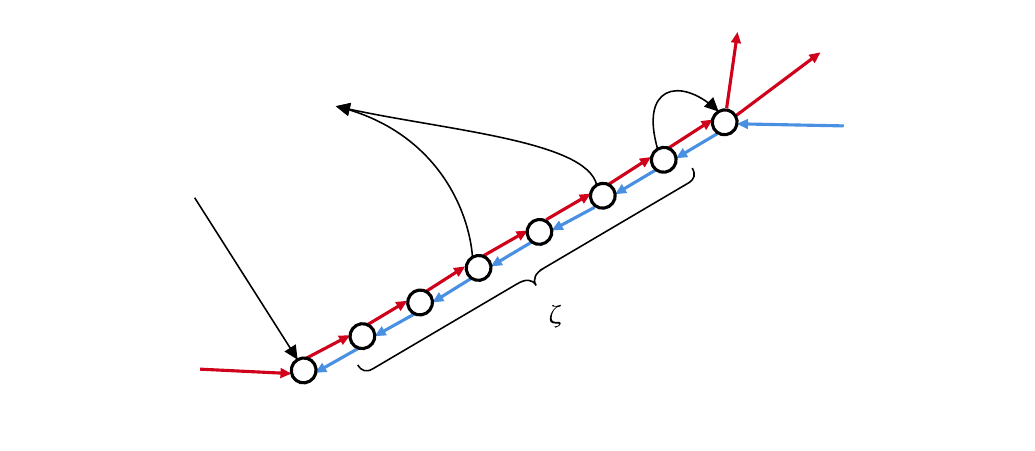}
\caption{An example of a chain $\zeta$. The edges of $A$ and $B$ are drawn as red and blue,
respectively. The first and the last node are singleton chains: one of them has incoming edge from $C$, while the other has two outgoing edges from $A$.}
\label{fig:chain}
\end{figure}

\begin{restatable}{lemma}{thin}
\label{lem:thin}
A thin graph $G$ has $m(G)=\Oh(n(G))$ and admits decomposition into $\polylog$ chains
such that either $|C|=\Oh(\sqrt{n(G)})$ or there exists a fat node.
Further, there is a linear time procedure that either finds such a decomposition (in such a case the graph
might or might not be thin) or determines that the graph is thick.
\end{restatable}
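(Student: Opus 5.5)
We will prove the lemma by lower-bounding the number of arborescences of a trimmed graph in terms of its local structure, using the two edge-disjoint arborescences $A,B$ from \Cref{lem:twodisjoint} and $C=E\setminus(A\cup B)$. The basic tool is an exchange argument. Since $A$ is an arborescence and every edge is nontrivial, we can build many distinct arborescences by independently replacing $A$-edges with $B$- or $C$-edges whose heads are different. Replacing the $A$-edge into $v$ by an edge $(u,v)$ is valid as long as $v$ is not an ancestor of $u$ in the current tree. The property from \Cref{lem:twodisjoint} (for every edge $(u,v)$, $v$ is not an ancestor of $u$ in $A$ or in $B$) guarantees that each single swap is valid in $A$ or in $B$.

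\textbf{Step 1: sparsity.} Every node $v\neq r$ has in-degree at least $2$, one edge from $A$ and one from $B$. Consider the nodes whose in-degree is exactly $2$ and those with larger in-degree. We aim to show $|\Ah(G)|\geq \prod_v \Omega(d^-(v))$ restricted to a suitable independent family of heads. A cleaner route is to choose a set $S$ of heads such that simultaneous swaps at the heads of $S$ never create a cycle. For example, take heads that are pairwise incomparable in $A$, or process heads in $A$-preorder using a greedy argument: each accepted swap invalidates only $O(1)$ others on average. If $m(G)\geq Kn(G)$ for a large constant $K$, this yields $2^{\Omega(m/\log m)}$ or similar, which is at least $m^4$ once $m$ is large. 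This contradicts thinness, so $m(G)=\Oh(n(G))$. Concretely, I would use the bound that a graph with $k$ nodes having independently choosable alternative in-edges has at least $2^{k}$ arborescences. For the antichain/chain issue I would use Dilworth on the $A$-tree: either the $A$-tree has an antichain of $\Omega(\log m)$ heads with extra edges, giving $2^{\Omega(\log m)}\geq m^4$ after tuning constants, or the heads of the extra edges lie on $\polylog$ root paths.

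\textbf{Step 2: chain decomposition.} Call a node \emph{special} if it is the head of a $C$-edge, or if its in-edges from $A$ and $B$ do not form the local chain pattern, i.e.\ its $B$-in-edge does not come from its $A$-child. Maximal runs of non-special nodes form chains by definition. The plan is to show that a thin graph has only $\polylog$ special nodes. Each special node contributes an independent binary choice, at least on an antichain or along a path with a suitable spacing argument. So $k$ special nodes in ``general position'' give $2^{\Omega(k/\log n)}$ arborescences, forcing $k=\polylog$. The dichotomy that either $|C|=\Oh(\sqrt n)$ or there is a fat node comes from counting $C$-edges. Suppose $|C|$ is large and $C$-edges with forward direction are spread over many distinct heads. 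Then choosing among them independently gives $\geq \binom{|C|}{4}$-type bounds, at least $m^4$ once $|C|\gg\sqrt n$, using pairs of swaps with $m^2$ choices each. Otherwise a constant fraction of them share a single head, which is then fat.

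\textbf{Step 3: algorithm.} This step is straightforward given the above. Compute $A,B,C$ in linear time and mark special nodes. If there are more than the $\polylog$ threshold, or $m>Kn$, or $|C|>c\sqrt n$ with no head of in-degree $\Theta(|C|)$, declare the graph thick, because the structural bounds above are contrapositives. Otherwise output the chains obtained by cutting at special nodes.

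\textbf{Main obstacle.} The hardest part is making the counting arguments of Steps 1–2 rigorous. Swaps at heads lying on one root path can interact and create cycles, so the independence of choices is not automatic. I would first try an inductive argument on $A$-preorder. When processing head $v$, either the $A$-edge or the alternative edge is valid regardless of earlier choices, by the no-ancestor property applied to $A$ or to $B$. This should make each nontrivial head contribute a factor of at least $2$ in at least one of the two trees, and the counts in the two trees combine.
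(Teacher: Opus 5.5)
Your overall architecture matches the paper: lower-bound the number of arborescences by exchanges against $A$ or $B$, take contrapositives, and run the construction in linear time, declaring the graph thick when a threshold is exceeded. Step~3 is fine once the bounds hold with explicit constants. What is missing is the single device that makes every count in the paper rigorous, and the substitutes you propose do not work.

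\textbf{The missing independence device.} The paper restricts to edges that all go \emph{forward} in one fixed preorder. Each $(u,v)\in C$ has $v$ not an ancestor of $u$ in $A$ or in $B$. Keep the half for which this holds in (say) $A$. Then keep the half for which $u$ precedes $v$ in the preorder of $A$; if $u$ follows $v$, it lies in a later subtree, so reversing the children order at every node makes it precede $v$. Now the $A$-edges and all kept edges go forward in one linear order. So every head independently takes its $A$-edge or any of its $d(v)$ kept edges, no cycle is possible, and this gives $\prod_{v\neq r}(1+d(v))$ arborescences. Your substitutes fail as follows.
\begin{itemize}
\item Your inductive fix (``either the $A$-edge or the alternative is valid regardless of earlier choices'') fails. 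The no-ancestor property is about ancestry in $A$ and $B$ themselves, not in the partially modified tree. Swaps certified against $A$ also cannot be mixed with swaps certified against $B$; the paper just keeps the larger half.
\item The Dilworth route fails outright. If $A$ is a Hamiltonian path there is no antichain of size two. Yet forward $C$-edges $(v_{j-2},v_j)$ along it give $2^{\Omega(n)}$ arborescences, and your branch ``heads lie on $\polylog$ root paths'' gives no bound at all.
\end{itemize}

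\textbf{Concrete errors in the individual steps.}
\begin{itemize}
\item \emph{Step~1.} $m=\Oh(n)$ is false without flatness: two nodes joined by $m$ parallel edges form a trimmed graph with $m<m^4$ arborescences. Your $2^{\Omega(m/\log m)}$ is false even with flatness, since $C$-edges concentrated on a constant number of heads give only polynomially many arborescences. The correct argument minimises $\prod(1+d(v))$ subject to $\sum d(v)\ge 16n$ and the flatness cap $d(v)\le 2(n-1)$, which gives at least $n^8\ge m^4$.
\item \emph{Step~2.} ``Each special node contributes an independent binary choice'' is wrong. If the $A$-child $u_{j+1}$ of $u_j$ is a proper but non-parent ancestor of $u_j$ in $B$, putting $u_j$'s $B$-edge into $A$ can close a cycle. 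The paper relocates the choice.
  \begin{itemize}
  \item If $u_{j+1}$ is not a $B$-ancestor of $u_j$, the $A$-edge $(u_j,u_{j+1})$ is a swap into $B$ at head $u_{j+1}$, forward in some preorder of $B$.
  \item Otherwise, on the $B$-path from $u_{j+1}$ down to $u_j$ it finds an edge $(v,v')$ with $v'$ not an $A$-ancestor of $v$. If no such edge existed, the path would only climb in $A$ and never reach $u_j$. The heads $v'$ are distinct because these path segments are disjoint.
  \item Branching nodes of $A$ are bounded separately through the leaves of $A$, since leaf swaps are always simultaneously valid.
  \end{itemize}
\item \emph{Fat node.} Your count is off: $\binom{|C|}{4}$ at $|C|\approx\sqrt n$ is about $n^2$, far below $m^4=\Omega(n^4)$. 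The paper uses that $C$-edges have only $\Oh(\log n)$ distinct heads, so groups smaller than $|C''|/\log^2 n$ hold a vanishing fraction of the edges. Seventeen large groups would give $(|C''|/\log^2 n)^{17}>n^8$, because $17/2>8$. Hence there are fewer than $17$ large groups, and one of them is fat.
\end{itemize}
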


We apply the procedure from \Cref{lem:thin}. If it determines that the graph is thick then we simply
iterate over $i=1,2,\ldots,n(G)-1$ and, for each of them separately, build $\Trim{G_{i}}$ with
\Cref{lem:trim} and call $\Recurse{\Trim{G_{i}}}$. In the remaining part of the description we
assume that a decomposition into $\polylog$ chains has been already found.
Further, we assume that either $|C|=\Oh(\sqrt{n(G)})$ or there exists a node $v$ such that there
exit $\Theta(|C|)$ edges of the form $(u,v)$, where $u$ is earlier than $v$ in the preorder numbering.

We still iterate over $i=1,2,\ldots,n(G)-1$. However, now we will avoid calling \Cref{lem:trim} to
build $\Trim{G_{i}}$. Instead, we will maintain a separate mechanism for extracting $NT(G_{i})$ in
$\Oh(|NT(G_{i})|)$ time. This mechanism will leverage the chain decomposition to build
an \emph{emulation graph} $H$ for $G$ in $\Oh(m(G))$ time. With $H$ in hand, we will be able to extract
 $NT(G_{i})$, for any $i$, in $\Oh(|C|+\polylog+|NT(G_{i})|)$ time. This is too slow to be applied separately for every
 $i=1,2,\ldots,n(G)-1$. We call $i$ \emph{interesting} when $h(a_{i})$
is the first or the last node of some chain or when there is an edge from $C$ outgoing from $h(a_{i})$.
Thus, the number of such $i$ is $\Oh(|C|)+\polylog$. We will additionally establish that, if $i$ is
not interesting then $NT(G_{i})=NT(G_{i+1})$, so in fact we only need to extract $NT(G_{i})$ using $H$
when $i$ is interesting.

\begin{restatable}{lemma}{emulation}
\label{lem:emulation}
Given a chain decomposition with $\polylog$ chains, in $\Oh(m(G))$ time we can
build an emulation graph $H$ that supports extracting $NT(G_{i})$ in $\Oh(|C|+\polylog+|NT(G_{i})|)$ time, for any $i$.
Further, if $i$ is not interesting then $NT(G_{i})=NT(G_{i+1})$.
\end{restatable}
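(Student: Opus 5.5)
The plan is to make non-usefulness in $G_i$ a local dominator question and let the chains compress the graph. By \Cref{prop:state}, in $G_i$ (prefix $a_1,\ldots,a_{i-1}$ contracted, $a_i$ removed) an edge $e=(u,v)$ is non-useless iff, after contraction, $v$ is not an ancestor of $u$ in the dominator tree of $G_i$. It is nontrivial iff some other non-useless edge shares its head. Contracting a prefix of $A$ in preorder merges the $A$-subtree spanned by $h(a_1),\ldots,h(a_{i-1})$ into the root. Removing $a_i$ only affects which nodes are dominated by $h(a_i)$.

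\textbf{Structure of $NT(G_i)$.} Let $w=h(a_i)$. A node can change its dominator status only if it lies in $w$'s subtree in $A$, since everything reachable from the contracted root avoiding $w$ stays reachable. Because $B$ is edge-disjoint from $A$ and satisfies the property of \Cref{lem:twodisjoint}, every node except $w$'s $A$-descendants is reachable avoiding $w$. So the only candidates for forced or useless edges are those with head $w$ or with both endpoints below $w$ in $A$.

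\textbf{Building $H$.} Replace each chain $\zeta=\{u_1,\ldots,u_k\}$ by its endpoints, joined by a pair of ``super-edges'' storing the $A$-path and $B$-path lists. The interior nodes have in-edges only from the chain, so $H$ has $\polylog$ chain-nodes plus the $C$-edges, i.e.\ $\Oh(|C|+\polylog)$ size. Each super-edge records the ordered interior nodes and their preorder numbers in $A$. To extract $NT(G_i)$:
\begin{itemize}
\item Locate $w$; if it is interior to a chain, split that chain at $w$.
\item Contract the prefix in $H$ by marking all $H$-nodes with preorder smaller than $i$ as root-merged.
\item Compute dominators of this $\Oh(|C|+\polylog)$-size graph by \Cref{lem:dom}.
\item Read off nontrivial edges.
\end{itemize}
Inside a chain whose both ends are reachable avoiding $w$, every interior node has two non-useless in-edges, one from each neighbour, so all chain edges are nontrivial. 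Enumerating them costs proportional to output. If only one end is reachable, the chain edges are forced, so we output nothing for it. Thus extraction costs $\Oh(|C|+\polylog+|NT(G_i)|)$. Building $H$ takes one pass over $G$, so $\Oh(m(G))$.

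\textbf{Non-interesting indices.} Suppose $w=h(a_i)$ is interior to a chain and has no outgoing $C$-edge; then $h(a_{i+1})$ is the next chain node $w'$. For $G_i$ and $G_{i+1}$, I would check that the contracted root together with the removed edge induce the same status on every other edge. In $G_i$, $w$ is reachable only via the $B$-edge from $w'$. In $G_{i+1}$, $w$ is merged into the root and $(w,w')$ is removed, so $w'$ is reachable only via $B$ from the next node. The remaining edges see an identical reachability picture, since the only difference is the position along the chain where the cut occurs. Hence the nontrivial sets coincide, with the chain edges near $w$ shifting from forced to contracted, neither being in $NT$. I would verify this by a direct case analysis on the edges incident to $w,w'$.

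\textbf{Main obstacle.} The delicate step is the splitting, and more generally proving that the dominator computation on compressed $H$ correctly reflects dominance in $G_i$ for interior chain nodes. This includes the case where the cut node $w$ lies inside a chain and its $A$-subtree descends through other chains. My first attempt is to argue that chain interiors have in-degree exactly two, both from chain neighbours, so dominance within a chain is determined by which endpoints are reachable. Then a lifting lemma from $H$'s dominator tree to $G_i$'s follows by path compression.
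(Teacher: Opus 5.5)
Your skeleton matches the paper's: gadget edges for chain pieces, dominators on the compressed graph, output-sensitive expansion of chain edges, and a bijection for non-interesting $i$. However, the compression is unsound as you set it up. The chain definition restricts only the \emph{incoming} edges of interior nodes, so an interior node may be the tail of edges of $C$. This is exactly why the statement's notion of an interesting index includes ``an edge of $C$ leaves $h(a_i)$''. If you collapse a whole chain $\{u_1,\ldots,u_k\}$ to its endpoints, such an edge $(x,y)$ has no tail in $H$, and no re-attachment is correct. The edge $(x,y)$ is usable iff $x$ is reached along the $A$-path from $u_1$ or along the $B$-path from $u_k$ while avoiding $y$, and this depends on the cut and on the rest of the graph. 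Your proposed lifting ``by path compression'' is valid only when interior nodes have no edges leaving the chain. The paper first subdivides every chain into subchains between consecutive interesting nodes (chain endpoints and tails of $C$-edges). There are $\Oh(|C|+\polylog)$ of them, which is the true source of $|H|=\Oh(|C|+\polylog)$. After this subdivision, every edge outside the long subchains has both endpoints in $H$ and emulates itself.

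Second, your rule for reading off chain edges is wrong: both ends being reachable avoiding $w$ does not make the chain edges nontrivial. Take $A$ to be the path $r,w,p,u_1,\ldots,u_k$ and $B$ the path $r,u_k,u_{k-1},\ldots,u_1,p,w$. This graph is trimmed, the pair $A,B$ has the property of \Cref{lem:twodisjoint}, and $\{u_1,\ldots,u_k\}$ is a chain. For $i=1$ we delete $a_1=(r,w)$. Both $u_1$ and $u_k$ are reachable avoiding $w$, yet $G_1$ has a single arborescence (the $B$-path), so every $A$-edge of the chain is useless and every $B$-edge is forced. The relevant question is whether one end of a (sub)chain can be reached without descending from the other end; $w$ plays no special role. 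The paper reads this off $H'$ as whether some arborescence of $H'$ uses $(u_b,u_e)$, uses $(u_e,u_b)$, or avoids both, combined with the position of the first removed edge on each list (\Cref{lem:emulates}). Your preliminary claim that only edges with head $w$ or with both endpoints below $w$ can become forced or useless is also false: a $C$-edge with tail below $w$ and head outside can become useless. You do not actually rely on that claim. Your argument for non-interesting $i$ does match the paper's bijection. In $G_i$, $w$ is attached only via the forced edge $(w',w)$, its edge $(w,w')$ is useless, and it has no other out-edges, so deleting it yields $G_{i+1}$. Just note that $w'$ may be the last node of its chain, so it need not be reachable only via $B$.
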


We use \Cref{lem:emulation} to construct $H$ in $\Oh(m(G))$ time. Then, we use $H$ to extract
$NT(G_{i})$ in $\Oh(|C|+\polylog+|NT(G_{i})|)$ time when $i=1$ or $i-1$ is interesting,
while for the remaining $i$ we just reuse $NT(G_{i-1})$. The overall time for obtaining $NT(G_{i})$, for $i=1,2,\ldots,n(G)-1$,
is hence $\Oh(m(G)+(|C|+\polylog)^{2}+\sum_{i}|NT(G_{i})|)=\Oh(n(G)+|C|^{2}+\sum_{i}|NT(G_{i})|)$.
We need to upper bound $|C|^{2}$ by $\Oh(n(G)+\sum_{i}|NT(G_{i})|)$.
If $|C|=\Oh(\sqrt{n(G)})$ then this is $|C|^{2}=\Oh(n(G))$ as required. Otherwise, by the assumption of \Cref{lem:thin}
we have a node $v$, such that there are $\Theta(|C|)$ edges of the form $(u,v)$, where
$u$ is earlier than $v$ in the preorder numbering. 
Let these edges be
$\{e_{1}, e_{2}, \ldots, e_{d}\}$, order by the preorder numbers of their tails.
Further, let $a_{k(i)}$ be the edge of $A$ such that $h(a_{k(i)}) = t(e_{i})$ (if $t(e_{i})=r$ then we set $k(i)=0$),
and let $a_{\ell}$ be the edge of $A$ such that $h(a_{\ell}) = v$ (we can assume that $v\neq r$, as otherwise $n=\Oh(1)$).
Then, we claim that $e_{i}$ is present in $NT(G_{j})$, for every $j=k(i)+1, k(i)+2, \ldots, \ell$.
This is because, for every such $j$, $t(e_{i})$ is already contracted into $r$, but $v$ is not. At
the same time, we can form an arborescence $B'\subseteq B$ of $G_{i}$.
$e_{i}\notin B'$ by definition, and we can modify $B'$ to include $e_{i}$ by first removing its edge incoming
into $t(e_{i})$. Thus, $e_{i}$ is indeed nontrivial in $G_{i}$.
Overall, we obtain that $\sum_{i}|NT(G_{i})| = \Omega(|C|^{2})$.
Thus, in either case $|C|^{2} = \Oh(n(G)+\sum_{i}|NT(G_{i})|)$.

Additionally, we maintain a set
$S$ such that $F(G_{i})\subseteq S$. To this end, we use the following observation. Let $b_{i}$ be the
(unique) edge in $B$ such that $h(a_{i})=h(b_{i})$.

\begin{lemma}
\label{lem:S}
$F(G_{i}) \subseteq \{ b_{i},b_{i+1},\ldots,b_{n(G)-1}\}$
\end{lemma}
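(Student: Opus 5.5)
To prove $F(G_{i}) \subseteq \{ b_{i},b_{i+1},\ldots,b_{n(G)-1}\}$, I would build two explicit arborescences of $G_{i}$ whose intersection is contained in $\{b_i,\ldots,b_{n(G)-1}\}$; a forced edge lies in every arborescence, so it lies in this intersection. Recall the setup. $G_{i}$ is obtained from $G$ by contracting $a_1,\ldots,a_{i-1}$ and removing $a_i$. Since the $a_j$ are ordered by preorder of their heads in $A$, the set $\{a_1,\ldots,a_{i-1}\}$ is a subtree of $A$ containing $r$. Let $R$ be its node set, which becomes the new root. The arborescence $B$ is edge-disjoint from $A$, and $b_j$ is the unique edge of $B$ entering $h(a_j)$.

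\textbf{First arborescence.} Take $B_i := \{b_i,\ldots,b_{n(G)-1}\}$, viewed in $G_i$; every edge is a nontrivial edge of $G$, so these edges make sense in $G_i$. Each node $h(a_j)$ with $j\ge i$ is not in $R$, and following $B$-parent pointers from it eventually reaches $r\in R$. Truncate this walk at the first node of $R$. The edges used are all $b_k$ with $k\ge i$ (edges entering nodes in $R$ are exactly $b_1,\ldots,b_{i-1}$, which are never used). Hence $B_i$ gives every non-root node of $G_i$ exactly one incoming edge and reaches all of them from the contracted root, so $B_i$ is an arborescence of $G_i$. In particular $a_i\notin B_i$, since $A$ and $B$ are disjoint.

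\textbf{Second arborescence.} Take $A' := (\{a_{i+1},\ldots,a_{n(G)-1}\}) \cup \{b_i\}$. The node $h(a_i)$ now enters via $b_i$, and all other nodes keep their $A$-edges. I must check that this creates no cycle. A cycle would have to pass through $b_i$, which means $t(b_i)$ lies in the $A$-subtree of $h(a_i)$, i.e.\ $h(b_i)=h(a_i)$ is an $A$-ancestor of $t(b_i)$. This is where \Cref{lem:twodisjoint} alone is not quite enough, since it only forbids $h$ from being an ancestor of $t$ in \emph{both} trees. However, $h(b_i)$ is the $B$-parent side of $b_i$'s head, so it cannot be a $B$-ancestor of $t(b_i)$; that only shows the condition fails in $B$, not in $A$. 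So this step is the main obstacle.

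\textbf{Fallback for the obstacle.} Rather than insisting on $b_i$ as the replacement edge, use that every edge of $G$ is nontrivial in $G$. Then $h(a_i)$ has some non-useless incoming edge other than $a_i$. More robustly, nontriviality of the reattachment can be argued from \Cref{prop:state} directly inside $G_i$: any arborescence of $G_i$ contains exactly one edge entering each node $h(a_j)$, $j\ge i$. For $j>i$, $a_j$ survives in $G_i$ and, together with a suitable edge entering $h(a_i)$, gives an arborescence avoiding $b_j$. So the intersection of $B_i$ with \emph{all} arborescences excludes every edge that is not some $b_j$ with $j\ge i$, and that intersection contains $F(G_i)$. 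Concretely, I would argue per node: for a forced edge $e$ of $G_i$ with head $h(a_j)$, $j\ge i$, $e$ lies in $B_i$ (as $B_i$ is an arborescence), so $e=b_j$. This already gives the lemma using only the first construction, since $F(G_i)\subseteq B_i=\{b_i,\ldots,b_{n(G)-1}\}$. The second construction is therefore unnecessary, and the proof reduces to carefully verifying that $B_i$ is an arborescence of $G_i$ after contraction. That verification rests on the prefix property of $\{a_1,\ldots,a_{i-1}\}$, which holds because of the preorder ordering.
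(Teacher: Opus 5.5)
Your final argument is correct and is essentially the paper's proof: $\{b_i,\ldots,b_{n(G)-1}\}$ is itself an arborescence of $G_i$, obtained by walking up $B$ from each $h(a_j)$, $j\ge i$, and stopping at the first node of the contracted prefix $R$, so every forced edge of $G_i$ must lie in it. You can drop the second-arborescence detour and the side claims in your fallback paragraph, which are unnecessary and partly unjustified. Also, the $b_k$ with $k\ge i$ survive in $G_i$ because they differ from $a_i$ and have heads outside $R$, not because every edge of $G$ is nontrivial.
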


\begin{proof}
Recall that $G_{i}$ is obtained by contracting the prefix of $A$ consisting
of $\{a_{1},a_{2},\ldots,a_{i-1}\}$ and removing $a_{i}$.
We claim that $\{ b_{i},b_{i+1},\ldots,b_{n(G)-1}\}$ contains an arborescence of $G_{i}$. This is certainly true for
the whole $B$. Consider a node $u$ of $G_{i}$, and let $p$ be a path consisting only of edges from $B$ connecting $r$ to $u$ in $G$.
A suffix $p'$ of $p$ starting at the last node on $p$ that belongs to $\{r, h(a_{1}), h(a_{2}), \ldots, h(a_{i-1})\}$ is a path connecting $r$
to $u$ in $G_{i}$, because in $G_{i}$ we have contracted all edges $a_{1}, a_{2}, \ldots, a_{i-1}$ into $r$.
Thus, for every node $u$ of $G_{i}$, there is a path consisting only of edges of $\{ b_{i},b_{i+1},\ldots,b_{n(G)-1}\}$ connecting
$r$ to $u$ in $G_{i}$, and so $\{ b_{i},b_{i+1},\ldots,b_{n(G)-1}\}$  contains an arborescence of $G_{i}$
(in fact, this is exactly such an arborescence).
Consequently, every forced edge of $G_{i}$ must belong to $\{ b_{i},b_{i+1},\ldots,b_{n(G)-1}\}$.
\end{proof}

By \Cref{lem:S}, for $S := \{ b_{i},b_{i+1},\ldots,b_{n(G)-1}\}$ we indeed have $F(G_{i}) \subseteq S$.
While we iterate over $i=1,2,\ldots,n(G)-1$ we thus ensure that, compared to its content at the beginning of the call,
$\A$ additionally contains all edges from \{$a_{1},a_{2},\ldots,a_{i-1}\}$ and all edges from $S$.
This requires only $\Oh(n)$ total time: first, we add the whole $B$ to $\A$, and then each
iteration of the for-loop requires adding $a_{i}$ and removing $b_{i}$.

For every $i=1,2,\ldots,n(G)-1$, we now have access to $NT(G_{i})$. Further, $\A$ contains the edges
$a_{1},a_{2},\ldots,a_{i-1}$ that have been contracted to obtain $G_{i}$. Additionally, $\A$ contains all edges
from $S$, where $F(G_{i})\subseteq S$. Thus, to recurse on $\Trim{G_{i}}$ we only need to build
$\Trim{G_{i}}$ and (temporarily) remove from $\A$ all edges in $NT(G_{i}) \cap S$. The latter can be easily
implemented in $\Oh(|NT(G_{i})|)$ time. The former is more problematic: for every edge $e\in NT(G_{i})$,
we need to determine its new $t(e)$, and further we need to ensure that the nodes of $\Trim{G_{i}}$
are $\{1,2,\ldots,n(\Trim{G_{i}})\}$. The latter requires sorting $s=\Oh(|NT(G_{i})|)$ elements, which in principle
requires $\Oh(s\log s)$ time in the comparison model. However, if $|NT(G_{i})|=\Omega(n)$ then we can
apply radix sorting in $\Oh(n+|NT(G_{i})|)=\Oh(|NT(G_{i})|)$ time to avoid paying the extra factor of $\log s$,
as the elements are $\{1,2,\ldots,n\}$.
We will batch together the instances obtained for $i=b,b+1,\ldots,e$ such that
$\sum_{i=b}^{e} |NT(G_{i})| = \Oh(n)$, and for every other such batch $\sum_{i=b}^{e} |NT(G_{i})| = \Omega(n)$. This will allow
us to apply radix sorting only once per such a group, and absorb the additional $\Oh(n)$ time.
The former reduces to a data structures question on the tree corresponding to $B$, and similarly
can be solved in $\Oh(|NT(G_{i})|)$ time per every $i$ in a batch. This is encapsulated in the
following lemma, with the proof deferred to \Cref{sec:batching}.

\begin{restatable}{lemma}{batching}
\label{lem:batching}
We can build and store $\Trim{G_{i}}$ using $NT(G_{i})$ and $B$, for every $i=b,b+1,\ldots,e$,
in $\Oh(n+\sum_{i=b}^{e}|NT(G_{i})|)$ time.
\end{restatable}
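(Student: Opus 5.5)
We need, for every edge $e\in NT(G_i)$ and every $i$ in the batch, the new tail of $e$ in $G_i$ after contracting $a_1,\ldots,a_{i-1}$, followed by relabelling the surviving endpoints to $\{1,\ldots,n(\Trim{G_i})\}$. We do not contract $F(G_i)$ explicitly: by \Cref{lem:S} those edges lie in $S=\{b_i,\ldots,b_{n(G)-1}\}$.

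\textbf{New tails.} In $G_i$ the prefix $a_1,\ldots,a_{i-1}$ is contracted into $r$. Hence a node $u$ is merged into $r$ exactly when $u\in P_i:=\{r,h(a_1),\ldots,h(a_{i-1})\}$. The forced edges of $G_i$ are then contracted as well. Every edge of $F(G_i)$ is some $b_j$ with $j\ge i$, so the node of $\Trim{G_i}$ containing $u$ is obtained by walking up the tree $B$ from $u$. The walk continues while the $B$-edge entering the current node is forced in $G_i$ and the current node is not in $P_i$. It stops at the first node $w$ that is in $P_i$, in which case $u$ maps to $r$, or whose incoming $b$-edge is nontrivial, in which case $u$ maps to $w$.

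Equivalently, mark in $B$ every node that is in $P_i$ or has its $B$-parent edge in $NT(G_i)$. The representative of $u$ is then its nearest marked ancestor in $B$, with nodes of $P_i$ sent to $r$. This is a marked-ancestor query. The heads of nontrivial edges are exactly the surviving nodes, so the heads need no translation.

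\textbf{Processing the batch offline.} As $i$ increases, $P_i$ only grows, and the set of marked nodes changes by the additions to $P_i$ plus the symmetric differences of the sets $\{b\in B: b\in NT(G_i)\}$. Both kinds of change are bounded by $\Oh(|NT(G_i)|+|NT(G_{i+1})|)$ plus one node per step. I would collect all queries of the batch (the tails of edges in $NT(G_i)$, tagged by $i$) together with the timeline of mark and unmark events. These form an offline nearest-marked-ancestor problem on a tree of size $n$ with $\Oh(n+\sum_i|NT(G_i)|)$ events. Each marking interval $[i_1,i_2)$ of a node is stored as a single item.

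This offline problem is the step I expect to be the main obstacle, since online versions need $\Omega(\log\log n)$. My first attempt would be the following. Because marks are inserted and deleted, I would treat time as a segment tree over the batch indices. Each marking interval is assigned to $\Oh(\log)$ nodes of that segment tree, and each node is solved as a static offline instance on $B$ using linear-time union-find for the static tree case (Gabow--Tarjan), or simply a DFS of $B$ with a stack per level. This costs a logarithmic factor, so I would need something sharper.

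The better route is to exploit structure. Marks coming from $P_i$ are monotone: they only get added, since the prefix of $A$ grows. So I would first handle contraction into $r$ separately with a DFS on $B$ and interval tests on preorder numbers. The number of insert-and-delete marks from nontrivial $b$-edges is charged to $|NT(G_i)|$. I would then sort all queries and marks by ($B$-preorder, time) using radix sort in $\Oh(n+\text{total})$ time, and run one DFS over $B$. During the DFS each node maintains, for the current root-to-node path, a structure over time answering ``latest-deepest marked ancestor active at time $i$''. For that I would try to use the static-tree union-find of Gabow--Tarjan, with time as a second tree dimension.

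\textbf{Relabelling and storing.} Finally, I collect all pairs $(i,\text{node})$ of endpoints across the batch and radix sort them in $\Oh(n+\sum_i|NT(G_i)|)$ time. Consecutive ranks within each $i$ give the labels $\{1,\ldots,n(\Trim{G_i})\}$. Writing out each $\Trim{G_i}$ then takes time linear in $|NT(G_i)|$.
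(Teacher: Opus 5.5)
Your reduction is correct. The vertices of $\Trim{G_i}$ are $r$ and the heads of edges in $NT(G_i)$, and the new tail of $f$ is the nearest (reflexive) $B$-ancestor of $t(f)$ lying in $P_i\cup D_i$, where $D_i=\{h(f): f\in NT(G_i)\}$. You also justify this nicely: the $B$-edge into a surviving node is non-useless, so $D_i$ is exactly the set of nodes whose $B$-parent edge is nontrivial. Relabelling with one radix sort of pairs $(i,\cdot)$ matches the paper.

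The gap is the central step. You cast the computation as a fully dynamic offline nearest-marked-ancestor problem, with marks inserted and deleted over time, and then never solve it. The segment-tree attempt loses a logarithm, as you note. ``A structure over time using Gabow--Tarjan with time as a second tree dimension'' is not an algorithm: Gabow--Tarjan handles only unions along a fixed tree, i.e.\ a marked set that changes monotonically. Your treatment of the monotone part, ``a DFS on $B$ and interval tests on preorder numbers'', also falls short. Interval tests decide ancestry, but they do not locate the deepest ancestor of $t(f)$ lying in $P_i$.

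The missing idea is that the non-monotone marks need not be handled dynamically at all. Split the marked set into $P_i$ and $D_i$. The nearest marked ancestor with respect to a union of two mark sets is the deeper of the two individual answers, since both lie on the root path of $t(f)$.

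For $D_i$, let $Q_i$ be the set of tails of edges in $NT(G_i)$. Both $|D_i|$ and $|Q_i|$ are $\Oh(|NT(G_i)|+1)$, so each $i$ can be solved as an independent static instance. Once $Q_i\cup D_i$ is sorted by $B$-preorder, sweep in decreasing preorder with a stack. Push query nodes, and at a mark $v$ pop and answer every stack top lying in the subtree of $v$. This answers all queries for $i$ in $\Oh(|NT(G_i)|)$ time. The sorting for all $i$ of the batch is a single radix sort of pairs $(i,\mathrm{pre}_B(v))$ in $\Oh(n+\sum_i|NT(G_i)|)$ time.

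For $P_i$, which only grows with $i$, process $i=e,e-1,\ldots,b$ in that order. Start with every node outside $P_e$ united with its $B$-parent. For each $i$, answer the finds as the rootmost element of the found set. Then union the node that leaves $P$ with its $B$-parent. This is static-tree set union, which is linear by Gabow--Tarjan. Taking the deeper of the two answers for each edge yields all new tails in $\Oh(n+\sum_i|NT(G_i)|)$ total time.
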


We comment that, from the point of view of obtaining the claimed overall time complexity, we could apply
\Cref{lem:batching} with $b=1$ and $e=n(G)-1$. However, this would increase the space complexity
to $\Omega((n(G))^{2})$, and our goal will be to keep the space complexity $\Oh(m)$ as discussed in \Cref{sec:space}.
Instead, we partition $1,2,\ldots,n(G)-1$ into fragments $b,b+1,\ldots,e$.
For every such fragment, either $b=e$ and $|NT(G_{i})| > c \cdot n$, or
$\sum_{i=b}^{e} |NT(G_{i})| \leq c\cdot n$ but $\sum_{i=b}^{e+1} |NT(G_{i})| > c \cdot n$ or $e=n(G)-1$,
where $c$ is a sufficiently small constant to be adjusted later when discussing the space complexity.
For fragments with $b=e$,  we obtain $\Trim{G_{i}}$ by applying \Cref{lem:trim} on $B\cup NT(G_i)$
in $\Oh(n)$ time. By the assumption $|NT(G_{i})| > c \cdot n$, this is actually $\Oh(|NT(G_i)|)$.
For each of the remaining fragments, we apply \Cref{lem:batching}. It
is straightforward to verify that the total time complexity becomes $\Oh(n+\sum_{i=1}^{n(G)-1}|NT(G_{i})|)$.
The total time complexity is thus as required to apply \Cref{lem:recurse}.

We conclude with the pseudocode of the whole procedure and a theorem summarizing the obtained time
complexity (module the details to be presented in the subsequent sections).
We note that the extra $\Oh(n)$ in the time complexity is only because the input graph might not admit
even a single arborescence.

\begin{theorem}
\label{thm:time}
All $N$ arborescences of a directed graph on $n$ nodes and $m$ edges can be reported in $\Oh(n+m+N)$ total time.
\end{theorem}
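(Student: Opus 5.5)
The plan is to assemble the pieces already established, with \Cref{lem:recurse} carrying the time analysis. First, if some node is unreachable from $r$ in $\G$, we detect this by a single traversal in $\Oh(n+m)$ time and stop with $N=0$. This is exactly where the additive $n$ comes from. Otherwise we run $\TrimAndRecurse{\G}$: by \Cref{lem:trim}, in $\Oh(n+m)$ time we add $F(\G)$ to $\A$ and obtain a trimmed graph whose arborescences are in bijection with $\Ah(\G)$. It then remains to show that every call $\Recurse{G}$ meets one of the two cost profiles required by \Cref{lem:recurse}, which gives total time $\Oh(m+N)$ on top of the preprocessing.

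Correctness I would argue by induction on $m(G)$. Every arborescence of $G$ other than $A$ has a unique first index $i$ (in the preorder of heads along $A$) with $a_i\notin$ it. Such arborescences are exactly $\Ah(G_i)$: they contain $a_1,\dots,a_{i-1}$ and avoid $a_i$. Since $A$ is connected from $r$ in preorder, the contracted prefix is a subtree containing $r$, so the correspondence is clean. Each $G_i$ has an arborescence, for instance the one from $B$ given by \Cref{lem:S}. Before each recursive call, $\A$ holds exactly the contracted prefix together with $F(G_i)$. This follows from the invariant on $S$ from \Cref{lem:S} and the temporary removal of $NT(G_i)\cap S$. Flattening is handled by the $\A'$ mechanism and \IterateSolutions. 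Hence every arborescence is reported exactly once.

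For the cost of a call, I flatten and apply \Cref{lem:twodisjoint} in linear time, then run the procedure of \Cref{lem:thin}. There are two cases.
\begin{enumerate}
\item If it declares $G$ thick, then $|\Ah(G)|\geq m(G)^4$, and building each $\Trim{G_i}$ by \Cref{lem:trim} costs $\Oh(n(G)m(G))$ in total, which is the first profile.
\item Otherwise we have a chain decomposition with $\polylog$ chains. Here I must check that $m(G)=\Oh(n(G))$ holds even though the graph is not guaranteed to be thin. I would either make the procedure verify the edge bound explicitly, falling back to the thick branch when it fails, or read it off the decomposition: chain interiors have in-degree two, and the remaining nodes number $\Oh(|C|+\polylog)$. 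The cost is then $\Oh(m(G))$ to build $H$ via \Cref{lem:emulation}, plus $\Oh(|C|^2+\sum_i|NT(G_i)|)$ for extraction over interesting indices. By the fat-node argument given above, this is $\Oh(n(G)+\sum_i|NT(G_i)|)$. Building the trimmed instances by fragments costs the same, using \Cref{lem:trim} for heavy single indices and \Cref{lem:batching} for light batches. Each full batch absorbs its $\Oh(n)$ term, since its $\sum|NT|$ exceeds $cn$, and only the final batch leaves an extra $\Oh(n(G))$.
\end{enumerate}
Maintaining $\A$ through $S$ costs $\Oh(n(G))$. The second profile therefore holds.

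The main obstacle is the batching step. The fragment's $\Oh(n)$ overhead refers to $n$ of the graph of the current call, so for the accounting I would use $n(G)$ rather than the global $n$, relabelling within each call. The remaining hard parts, \Cref{lem:thin}, \Cref{lem:emulation} and \Cref{lem:batching}, are deferred to later sections; the theorem is their combination with \Cref{lem:recurse}.
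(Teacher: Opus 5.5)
Your proposal is correct and follows the paper's own argument: after the $\Oh(n+m)$ reachability check and \Cref{lem:trim}, you show that every call $\Recurse{G}$ fits one of the two cost profiles of \Cref{lem:recurse}, using \Cref{lem:thin}, the fat-node bound on $|C|^{2}$, \Cref{lem:S} and the fragment/batching scheme, and charging $n(G)$ rather than the global $n$ is right. Two small corrections: your second way of securing $m(G)=\Oh(n(G))$ (reading it off the decomposition) fails, since $m(G)=2(n(G)-1)+|C|$ and the decomposition does not bound the in-degrees of chain endpoints (it only gives $m(G)=\softOh(n(G))$), so use the explicit test $|C|<64\,n(G)$ with fallback to the thick branch, justified by the counting in the proof of \Cref{lem:linear}; and under the paper's fragment rule a light fragment $b,\ldots,e$ has $\sum_{i=b}^{e}|NT(G_i)|\le c\,n(G)$, so its $\Oh(n(G))$ overhead is paid by $\sum_{i=b}^{e+1}|NT(G_i)|>c\,n(G)$, with each index charged at most twice.
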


\begin{algorithm}[H]
    \caption{\Recurse{G}}
    Flatten $G$\;
    
    Find two edge-disjoint arborescences $A$ and $B$;\

    \ADD{$A$}, \IterateSolutions,  and \REMOVE{$A$}\;
    \uIf{chain decomposition of $G$ fails}
    {
        \For{$i=1,2,\ldots,n(G)-1$}{
           \TrimAndRecurse{$G_{i}$}\;
        }
    }
    \Else{ 
        $S \gets B$\;
        \ADD{$B$}\;
        \For{$i=1,2,\ldots,n(G)-1$}{
            Maintain an emulation graph $H_i$ allowing for a quick retrieval of
            $NT(G_i)$\;
            \REMOVE{$NT(G_i) \cap S$}\;
            Build $\Trim{G_i}$ using $NT(G_i)$ and $B$\Comment*[r]{batched over $i=i_{j},i_{j}+1,\ldots,i_{j+1}-1$}
            \Recurse{$\Trim{G_i}$}\;
            \ADD{$NT(G_i) \cap S$}\;
            $S \gets S \setminus \{b_i\}$\;
            \ADD{$a_i$} and \REMOVE{$b_i$}\;
        }
        \REMOVE{$A$}\;
    }
\end{algorithm}

\section{Chain Decomposition of Thin Graphs}
\label{sec:thin}

The goal of this section is to analyse the structure of a trimmed graph $G=\langle V,E,r\rangle$
on $n$ nodes and $m$ edges that admits less than $m^{4}$ arborescences.
We begin with applying \Cref{lem:twodisjoint} to find in linear time two edge-disjoint arborescences $A$ and $B$
such that, for every edge $(u,v)\in E$, $v$ is not an ancestor of $u$ in $A$ or in $B$.
This partitions $E$ into $A$, $B$, and $C := E\setminus (A\cup B)$.
Our goal is to show that, assuming that the graph admits less than $m^{4}$ arborescences,
we have $|C|=\Oh(n)$ and the nodes of $G$ can be partitioned into $\polylog$ chains.
Recall that a chain $\zeta$ is a set of nodes
$\{u_1, u_2, \ldots, u_k\}$ such that $(u_{i},u_{i+1})\in A$ and $(u_{i+1},u_{i})\in B$,
for every $i=1,2,\ldots,k-1$. Further, denoting $\zeta^\circ := \zeta \setminus \{u_1, u_k\}$,
there are no other edges incoming into any of the nodes belonging to $\zeta^\circ$.
The goal of this section is to prove the following.

\thin*

We first analyse the number of edges in such a graph.

\begin{lemma}
$ |C| = \Oh(m) $
\label{lem:linear}
\end{lemma}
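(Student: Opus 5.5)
Taken literally, the statement is immediate, since $C\subseteq E$ gives $|C|\le m$. What the rest of the section needs, namely $m(G)=\Oh(n(G))$ in \Cref{lem:thin}, is the stronger bound $|C|=\Oh(n)$ for a thin graph. I will prove that version. The idea is to show that many edges in $C$ force at least $m^{4}$ arborescences, contradicting thinness.

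\textbf{Step 1: orienting $C$ along a topological order.} Fix the pair $A,B$ from \Cref{lem:twodisjoint}. Every edge $(u,v)\in C$ satisfies, for some $X\in\{A,B\}$, that $v$ is not an ancestor of $u$ in $X$. Now consider two orders of $X$: the preorder $\sigma_X$, and the preorder $\sigma'_X$ obtained by visiting children in reversed order. Both are topological orders of $X$. If $v$ is not an ancestor of $u$ (and $u\neq v$), then $u$ lies in a subtree that precedes $v$'s subtree, or follows it. Hence $u$ comes before $v$ in $\sigma_X$ or in $\sigma'_X$. So $C$ splits into four classes, indexed by the choice of $X$ and of the order. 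Some class $C^{*}$, with order $\pi$ on tree $X$, has $|C^{*}|\ge |C|/4$, and every edge of $C^{*}$ goes forward in $\pi$.

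\textbf{Step 2: counting arborescences.} Choose, independently for every non-root node $v$, either its $X$-edge or one edge of $C^{*}$ entering $v$. Every chosen edge goes from an earlier node to a later node in $\pi$, since $X$-edges do as well. So each node has exactly one incoming edge, and following incoming edges backwards strictly decreases $\pi$ and must end at $r$. Thus every such choice is an arborescence. This gives
\[
|\Ah(G)|\ \ge\ \prod_{v}(1+f_v),
\]
where $f_v$ is the number of $C^{*}$-edges entering $v$ and $\sum_v f_v=|C^{*}|$.

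\textbf{Step 3: lower-bounding the product.} This is the only real point: the product could be small if $C^{*}$ were concentrated on a single head. Flatness prevents that, because there are at most two parallel copies of each edge, so $f_v\le 2n$. Given the sum $F=\sum_v f_v$ and this cap, the product $\prod_v(1+f_v)$ is minimised by packing edges into as few heads as possible. Hence it is at least about $(2n)^{F/(2n)}$.

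\textbf{Conclusion.} Suppose $|C|\ge Kn$. Then $F\ge Kn/4$, so $|\Ah(G)|\ge (2n)^{K/8}$. On the other hand $m\le 2n+|C|=\Oh(n^{2})$ by flatness, so $m^{4}=\Oh(n^{8})$. Choosing a large enough constant $K$ (around $70$, with care for small $n$ absorbed into the constant) makes the graph thick. So a thin graph has $|C|<Kn$, which gives $|C|=\Oh(n)$ and therefore $m=\Oh(n)$.
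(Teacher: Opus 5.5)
Your proof is correct and follows essentially the same route as the paper: split $C$ four ways (tree $A$ or $B$, and preorder or child-reversed preorder, which is what the paper calls the ``symmetric'' case), count $\prod_{v}(1+d(v))$ arborescences by independently choosing the tree edge or a forward $C$-edge at each node, and use a cap on $d(v)$ with a packing argument to exceed $m^{4}$ once $|C|\geq 64n$. You are also right on two further points: the intended statement is $|C|=\Oh(n)$, since the literal $\Oh(m)$ is trivial, and the cap $d(v)\leq 2(n-1)$ comes from flatness, not from trimmedness as the paper says.
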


\begin{proof}
Recall that, for every edge $(u,v)\in C$, $v$ is not an ancestor of $u$ in $A$ or in $B$.
Assume that $|C| \geq 64 n$. Either for at least $|C|/2$ edges $(u,v)\in C$ it holds that
$v$ is not an ancestor of $u$ in $A$, or for at least $|C|/2$ edges $(u,v)\in C$ it holds
that $v$ is not an ancestor of $u$ in $B$. Assume the former (the other case is symmetric)
and denote the corresponding set of edges by $C'\subseteq C$, where $|C'| \geq 32n$.
We assign the preorder number to every node in $A$.
Either for at least $|C'|/2$ edges $(u,v)\in C'$ it holds that $u$ is before $v$ in the preorder
numbering, or for at least $|C'|/2$ edges $(u,v)\in C'$ it holds that $u$ is after $v$ in the
preorder numbering. Assume the former (the other case is symmetric, as $v$ is not an ancestor of $u$)
and denote the corresponding set of edges by $C''\subseteq C$, where $|C''| \geq 16n$.
To summarise, for every edge $(u,v)\in C''$ it holds that $u$ is before $v$ in the preorder
numbering.

For a node $v\in V$, we define $d(v)$ by the number of edges $(u,v)\in C''$.
As the graph is trimmed, we have $d(v) \leq 2(n-1)$.
We claim that we can construct $\prod_{v\neq r} (1+d(v))$ distinct arborescences.
To see this, consider the nodes $v\in V$ in the increasing order of their preorder numbers.
For each node $v\neq r$, either choose one of its $d(v)$ incoming edges from $C''$
or the edge from its parent in $A$. By definition of $C''$, any such choice results
in an arborescence, and there are $\prod_{v\neq r} (1+d(v))$ possibilities overall.

Now, to lower bound the number of arborescences we only need to lower bound
$\prod_{v\neq r} (1+d(v))$ under the constraint $\sum_{v\neq r} d(v) \geq 16n$
and $d(v) \leq 2(n-1)$. By analysing the function $f(x)=x(s-x)$ we see that
to minimise the product we should set $d(v)=2(n-1)$ for as many nodes $v$ as possible.
But, because $\sum_{v\neq r} d(v) \geq 16n$, we will then set $d(v)=2(n-1)$ for
at least 8 nodes $v$, making the product at least $n^{8} \geq m^{4}$,
contradicting the assumption.
\end{proof}

Next, we analyse the number of nodes with an incoming edge from $C$.
\begin{lemma}
\label{lem:logarithmic}
There are $\Oh(\log n)$ nodes $v$ such that there exists an edge $(u,v)\in C$.
\end{lemma}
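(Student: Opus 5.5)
The plan is to mimic the proof of \Cref{lem:linear}. Let $v_1,\ldots,v_k$ be the distinct nodes having an incoming edge from $C$, and for each pick one such edge $e_j=(u_j,v_j)\in C$. I will show that $G$ has at least $2^{k/4}$ arborescences. Thinness gives $2^{k/4}<m^4$, hence $k<16\log m$. Since \Cref{lem:linear} gives $m=|A|+|B|+|C|=\Oh(n)$, this yields $k=\Oh(\log n)$.

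First, by the property from \Cref{lem:twodisjoint}, for each $j$ the node $v_j$ is not an ancestor of $u_j$ in $A$ or not an ancestor of $u_j$ in $B$. By pigeonhole, at least $k/2$ indices fall into the same case. By symmetry, say $v_j$ is not an ancestor of $u_j$ in $A$ for all of them. Fix any ordering of children in $A$, and consider two preorders of $A$: $\pi_1$ using this child ordering, and $\pi_2$ using the reversed child ordering.

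The key step, which I expect to be the main obstacle, is the following claim: for each such $j$, $u_j$ precedes $v_j$ in $\pi_1$ or in $\pi_2$. If $u_j$ is an ancestor of $v_j$ in $A$, it precedes $v_j$ in both preorders. Otherwise $u_j$ and $v_j$ are incomparable in $A$, since $v_j$ is not an ancestor of $u_j$. Then they lie in subtrees of two different children of their lowest common ancestor. Reversing the child order reverses their relative position, so $u_j$ precedes $v_j$ in exactly one of $\pi_1,\pi_2$. By pigeonhole again, we keep at least $k/4$ indices $J$ and a single preorder $\pi$ in which $u_j$ precedes $v_j$ for all $j\in J$.

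Now for every subset $X\subseteq J$ define $A_X$ from $A$ as follows: for each $j\in X$, replace the $A$-edge entering $v_j$ by $e_j$. The nodes $v_j$ are distinct, so every non-root node still has exactly one incoming edge, and $A_X$ has $n-1$ edges. Every edge of $A_X$ goes from an earlier node to a later node in $\pi$. This holds for $A$-edges because $\pi$ is a preorder of $A$, and for the $e_j$ by the choice of $J$. Hence $A_X$ is acyclic, and following parent pointers from any node must terminate at $r$, so $A_X$ is an arborescence. Different subsets $X$ give different edge sets, since the $e_j$ are distinct edges of $C$, disjoint from $A$. This yields $2^{|J|}\ge 2^{k/4}$ arborescences, which combined with the bound $m=\Oh(n)$ finishes the argument as described in the first paragraph.
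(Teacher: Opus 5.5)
Your proof is correct and is essentially the paper's argument. You keep one incoming $C$-edge per head, pigeonhole over $A$ versus $B$ and over the preorder direction, and swap these edges into the arborescence independently to get $2^{\Omega(k)}$ arborescences, contradicting thinness; your two preorders $\pi_1,\pi_2$ just make precise the ``symmetric case'' the paper handles by reversing child orders, and you finish with $m=\Oh(n)$ from \Cref{lem:linear} where the paper uses $n^{8}\geq m^{4}$.
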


\begin{proof}
Assume otherwise, namely there are at least $32\log n$ such nodes.
We partition the edges in $C$ similarly as in the proof of \Cref{lem:linear},
but instead of considering the number of edges we keep track of the number of nodes
with an incoming edge. We obtain a set of edges $C''\subseteq C$
such that there exist at least $8\log n$ nodes $v$ such that there exists
an edge $(u,v)\in C''$. But this allows us to construct at least
$2^{8\log n} = n^{8} \geq m^{4}$ arborescences, which is a contradiction.
\end{proof}

Finally, we analyse the number of leaves in $A$ and $B$.
\begin{lemma}
\label{lem:leaves}
There are $\Oh(\log n)$ leaves in $A$ and $B$.
\end{lemma}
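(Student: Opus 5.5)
We prove the bound for $A$; the argument for $B$ is symmetric, with the roles of the two trees swapped. Assume $A$ has at least $c\log n$ leaves for a suitable constant $c$. We will exhibit at least $2^{\Omega(\log n)}\geq n^{8}\geq m^{4}$ arborescences, contradicting thinness. Recall that $m=\Oh(n)$ by \Cref{lem:linear}, so $m^4 = \Oh(n^4)$ and a bound of $n^{8}$ is more than enough.

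The key observation is the following. Let $\ell\neq r$ be a leaf of $A$. Since the graph is trimmed, $\ell$ has an incoming edge $b_\ell\in B$ different from its $A$-edge. A leaf of $A$ has no outgoing $A$-edges. Hence in $A$ we may replace the $A$-edge into $\ell$ by any other non-useless edge $(u,\ell)$ with $u\neq \ell$, and the result is still an arborescence. Indeed, removing the edge into a leaf disconnects only $\ell$ itself, and $u$ is still reachable from $r$ in $A$.

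Moreover, these swaps can be made independently at every leaf simultaneously, provided the new tail $u$ is not itself a leaf whose edge we are also swapping. This is automatic, because leaves lose nothing they need: every non-leaf keeps its $A$-parent edge, so all non-leaves remain reachable. Each leaf then receives an edge from a non-leaf, or from a leaf that is itself reached.

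The subtle case is when the alternative edge into leaf $\ell$ comes from another leaf $\ell'$. Cycles can then only arise among leaves. To avoid them, fix a linear order on the leaves by preorder in $A$, and call a leaf \emph{good} if it has an alternative incoming edge whose tail is not a later-chosen leaf.

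It is simpler still to restrict to leaves $\ell$ with an alternative incoming edge whose tail is a non-leaf. If at least half of the leaves have this property, choosing independently for each such leaf between its $A$-edge and the alternative edge yields $2^{(c/2)\log n}$ arborescences. Otherwise, at least half of the leaves have all their non-$A$ incoming edges coming from other leaves. Consider the auxiliary graph on these leaves formed by one chosen alternative edge per leaf, so each vertex has in-degree one. Orient the choices: every such graph contains a set of at least half of its vertices whose chosen in-edges come from outside the set, either by taking every other vertex along the cycles and paths, or simply by taking one side of a random partition, which keeps $1/4$ of the vertices in expectation.

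For these leaves, the tail of the alternative edge is a leaf that keeps its $A$-edge. Independent choices therefore again give $2^{\Omega(c\log n)}$ arborescences. Choosing $c$ large (say $64$) gives at least $n^{8}\geq m^{4}$ arborescences, which is a contradiction.

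The main obstacle is exactly this leaf-to-leaf interaction, which has to be controlled so that the independent choices never create a cycle. The selection of an "independent" subset of leaves in the functional graph described above handles it. Also note that the alternative edge may belong to $C$ as well as to $B$; this is harmless, since all we use is that it exists and is not the $A$-edge, which holds because each leaf's $A$-edge is nontrivial. Parallel copies are also harmless, since a parallel copy of the $A$-edge is an alternative edge from a non-leaf.
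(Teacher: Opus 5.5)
Your proof is correct, but it takes a more roundabout route than the paper. The paper does not pick an arbitrary alternative in-edge for each leaf of $A$; it always uses the leaf's parent edge in $B$. With that choice, for every subset $L'$ of the leaves, replacing the $A$-edges into $L'$ by their $B$-parent edges is automatically an arborescence. Walking backwards along incoming edges from a node of $L'$ climbs $B$, so it cannot cycle and must leave $L'$, since $r\notin L'$. From a node outside $L'$ the walk climbs $A$ through non-leaves, which are never swapped, and reaches $r$. This gives all $2^{\ell}$ subsets directly, with no case split and no independent-set extraction. Your approach works for an arbitrary choice of one alternative in-edge per leaf and never uses that $B$ is a tree. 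The price is exactly the leaf-to-leaf cycles you identify, which forces your two cases (some alternative tail is a non-leaf, versus all alternatives come from leaves) and the pseudoforest argument. That costs a constant factor in the exponent, which is harmless here. Two statements along the way are inaccurate, though neither breaks the proof. First, the sentence claiming simultaneous swaps are ``automatic'' is false for arbitrary alternatives, as you go on to acknowledge; it becomes true only for the $B$-parent choice. Second, a subset containing at least half the vertices whose chosen in-edges come from outside it need not exist: a $3$-cycle of leaves admits only one such vertex, so ``every other vertex'' does not give half. Your fallback does work: keeping those $v$ on a random side $S$ whose chosen in-neighbour lies outside $S$ retains a quarter of the vertices in expectation. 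Alternatively, $3$-colouring the pseudoforest gives a third deterministically, which is enough.
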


\begin{proof}
Consider the leaves in $A$ (analysing the number of leaves in $B$ is done with the same reasoning)
and denote them by $L=\{v_{1},v_{2},\ldots,v_{\ell}\}$. We will assume that $\ell \geq 8\log n$.
For each subset $L'\subseteq L$, we obtain an arborescence as follows: we take all
edges of $A$ except for the edges leading to the leaves in $v$. This create a partial arborescence $A'$.
Then, for every $v\in L'$ we add the edge from the parent of $v$ to $v$ in $B$. This guarantees that
there is exactly one edge incoming into every node $v\neq r$. To see that there are no cycles
created by the added edges, observe that the edges from $B$ that we add to $A'$ form
a collection of paths, each path starts at some $u\in A'$ and then visits some leaves
from $L'$, with no two paths visiting the same $v\in L'$.
The number of obtained arborescences is $2^{8\log n} = n^{8} \geq m^{4}$, which
is a contradiction.
\end{proof}

We are now ready to define the chains. This will be done in two steps: we will
first define an initial partition of the nodes into $\Oh(\log n)$ almost chains, and then further
subpartition every almost chain into $\Oh(\log n)$ chains, obtaining $\Oh(\log^{2} n)$ chains overall.

We define the almost chains as follows. We first designate the root and each node with more than one child in $A$
to be in its own almost chain. Similarly, we designate each node with an incoming edge
from $C$ to be in its own almost chain. By \Cref{lem:logarithmic} and \Cref{lem:leaves},
this creates $\Oh(\log n)$ singleton almost chains.
Then, we observe that the remaining nodes can be partitioned into maximal paths in $A$,
each such path starting at a child of a node with more than one child in $A$
or with an incoming edge from $C$, and then continuing down in $A$, terminating just before reaching
another such node (or after having reached a leaf). Further, the number of such maximal paths can be bounded by 
the sum over all nodes with more than one child in $A$ of their degree in $A$ increased by the number
of nodes with an incoming edge from $C$. This is $\Oh(\log n)$, by \Cref{lem:leaves}
and \Cref{lem:logarithmic}, respectively.

Now, consider the $i$-th almost chain, and let its nodes be $\{u^{i}_{1},u^{i}_{2},\ldots,u^{i}_{k_{i}}\}$,
where we have $(u^{i}_{j},u^{i}_{j+1})\in A$, for every $j=1,2,\ldots,k_{i}-1$.
We assign the preorder number to every node in $B$.
We observe that the sum over all $i$ of the number of indices $j$ such that $u^{i}_{j}$ is an ancestor of $u^{i}_{j+1}$
in $B$ is bounded by $8\log n$, as otherwise we could (proceeding as in the proof of \Cref{lem:logarithmic})
create at least $m^{4}$ arborescences.
Similarly, the sum over all $i$ of the number of indices $j$ such that the preorder number
of $u^{i}_{j}$ is smaller than the preorder number of $u^{i}_{j+1}$ and neither of the nodes is an ancestor of the other
is bounded by $8\log n$.
Symmetrically, the sum over all $i$ of the number of indices $j$ such that the preorder number
of $u^{i}_{j}$ is larger than the preorder number of $u^{i}_{j+1}$ and neither of the nodes is an ancestor of the other
is also bounded by $8\log n$.
For every such index $i$, we split the corresponding almost chain into two by removing the edge $(u_{i},u_{i+1})$.
This increases the number of almost chains by $\Oh(\log n)$. Thus, we now have $\Oh(\log n)$ almost chains,
and denoting the nodes of the $i$-th of them by $\{u^{i}_{1},u^{i}_{2},\ldots,u^{i}_{k_{i}}\}$,
we still have $(u^{i}_{j},u^{i}_{j+1})\in A$, for every $j=1,2,\ldots,k_{i}-1$,
but additionally $u^{i}_{j+1}$ is an ancestor of $u^{i}_{j}$ in $B$, for every $j=1,2,\ldots,k_{i}-1$.

\begin{figure}[h]
\centering
\includegraphics[width=\textwidth]{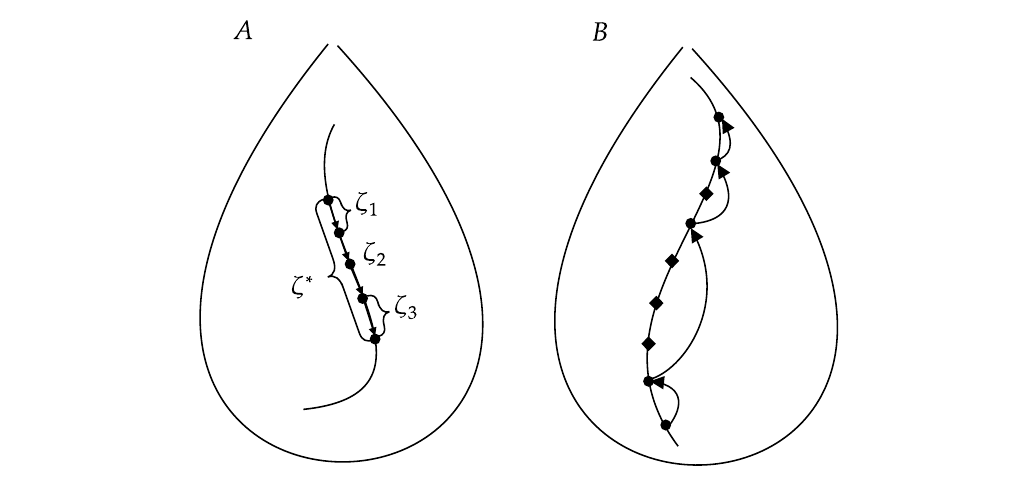}
\caption{$\zeta^*$ is an almost chain divided into three chains.}
\label{fig:split}
\end{figure}

The final step is to consider a single almost chain, let its nodes be $\{u_{1},u_{2},\ldots,u_{k}\}$,
and bound the number of indices $i$ such that $(u_{i+1},u_{i})\notin B$, in other words $u_{i+1}$ is not a parent of $u_{i}$ in $B$. For every such $i$, we have a path from $u_{i}$ to $u_{i+1}$ consisting of at least two edges
from $B$. Further, these paths are node-disjoint except possibly for the last node of each path being the same
as the first node of the subsequent path.
For every such path, we claim that we can find an internal node $v$ such that for its successor $v'$ on the path we have $(v,v')\in B$
and $v'$ is not an ancestor of $v$ in $A$. To show this, assume otherwise. Then, the first edge on the path
is from $u_{i+1}$ to a proper ancestor of $u_{i}$ (it cannot be $u_{i}$ itself, as the path consist of at least
two edges), and then continues climbing up in $A$. Therefore, it cannot return to $u_{i}$, a contradiction.
The identified nodes $v'$ are pairwise distinct, as the paths without their first nodes are node-disjoint.
Then, if there are at least $32\log n$ such problematic indices $i$ then following the argument
from the proof of \Cref{lem:logarithmic} gives us a contradiction.
So, there are $\Oh(\log n)$ such indices, and for each of them we split the almost chain into two by removing
the edge $(u_{i},u_{i+1})$. See \Cref{fig:split}. This finishes the construction.

We showed that, for a trimmed graph on $n$ nodes and $m$ edges that admits less than $m^{4}$ arborescences,
we can partition its nodes into $\polylog$ chains with the required properties. Further, we observe that
such a partition can be found in linear time by directly following the above procedure. The procedure
can be run on any trimmed graph, without any guarantee on the number of arborescences, and
it is straightforward to verify that it always runs in linear time.
If the procedure produces only $\polylog$ chains then we define the graph to be thin.
In the end we obtain the following: given a trimmed graph, we can
process it in linear time and either conclude that it admits at least $m^{4}$ arborescences, or find
a partition of its nodes into $\polylog$ chains with the required properties.

It remains to establish that, if $|C| = \Omega(\sqrt{n})$ then there exists a fat node.
As in the proof of \Cref{lem:linear}, let us choose $C'' \subseteq C$ such that
$|C''| \geq |C|/4$ and after possibly swapping $A$ with $B$ and reversing the order
of the children at every node, for every $(u,v) \in C''$, $v$ is not an ancestor of $u$ in $A$
and further $v$ is after $u$ in the preorder numbering of $A$. Let us next
group the edges in $C''$ by their heads. By \Cref{lem:logarithmic} we know there is only $\Oh(\log n)$
such groups. The total number of edges in groups of size less than $|C''| / \log^{2}n$ is thus
$\Oh(|C''|/\log n)$, making the total number of edges in larger groups $\Omega(|C''|)$.
Further, if there are at least 17 such larger groups then, by choosing
one edge from each such group and adding enough edges from $A$ to form an arborescence,
we can create (for a sufficiently large $n$) at least $(|C''|/\log^2 n)^{17} > n^8 \geq m^4$ arborescences, so the graph is not thin, and we can terminate.
Otherwise, there are less than 17 large groups, so one of them contains $\Omega(|C''|)=\Theta(|C|)$ edges,
and their heads are all equal, corresponding to the sought fat node.

This establishes \Cref{lem:thin}.

\section{Emulation Graph}
\label{sec:emulation}

The goal of this section is to show how to construct an emulation graph $H$ with the following properties.

\emulation*

We begin with subdividing each chain into \emph{subchains} defined as follows. 
Consider a chain $\zeta = \{u_1, u_2, \ldots, u_k\}$ such that $(u_{i},u_{i+1})\in A$ and $(u_{i+1},u_{i})\in B$,
for every $i=1,2,\ldots,k-1$. Then, a subchain is a maximal fragment $\{ u_{b},u_{b+1},\ldots,u_{e}\}$
between two interesting nodes, i.e. $u_{b}$ and $u_{e}$ are interesting, but
$u_{b+1},u_{b+2},\ldots,u_{e-1}$ are not.
Because the number of chains is $\polylog$, the number of subchains is $\Oh(|C|+\polylog)$.
We now obtain $H$ from $G$ as follows. For each subchain $\{ u_{b},u_{b+1},\ldots,u_{e}\}$, such that
$b < e$ (we call such a subchain \emph{long}), we replace it with the following gadget.
First, we remove the original inner nodes $\{ u_{b+1},u_{b+2},\ldots,u_{e-1}\}$.
Next, we add the following edges:
\begin{description}
\item[$(u_{b},u_{e})$: ] this edge \emph{emulates} edges $(u_{i},u_{i+1})$, for $i=b,b+1,\ldots,e-1$.
\item[$(u_{e},u_{b})$: ] this edge \emph{emulates} edges $(u_{i+1},u_{i})$, for $i=b,b+1,\ldots,e-1$.
\end{description}
Additionally, each long subchain stores two doubly-linked lists of edges: $(u_{b},u_{b+1}), (u_{b+1},u_{b+2}), \ldots$ and similarly
$(u_{e},u_{e-1}), (u_{e-1},u_{e-2}), \ldots$.
For each of the remaining edges $e$ of $G$, it still exists in $H$ (and emulates itself).
Thus, the edges of $G$ are partitioned into the set of edges, each of them emulated by a single edge of $H$,
and we maintain a list of edges emulated by every edge of $H$.

For a subgraph $G'$ of $G$ (obtained by keeping only a subset of edges $E'\subseteq E$), we can naturally
define its emulation graph $H'$ by removing every edge of $H$ that emulates one of the edges in $E\setminus E'$.
Additionally, for each long subchain we store a pointer to the first removed edge on the list
$(u_{b},u_{b+1}), (u_{b+1},u_{b+2}), \ldots$ and similarly a pointer to the first removed edge on the list
$(u_{e},u_{e-1}), (u_{e-1},u_{e-2}), \ldots$. 
We now have the following property.

\begin{lemma}
\label{lem:emulates}
Consider a subgraph $G'$ of $G$, such that $r$ can reach any node in $G'$, and let $H'$ be its emulation graph.
Then, given $U(H')$, $F(H')$ and $NT(H')$ we can produce $NT(G')$ in $\Oh(|H'|+|NT(G')|)$ time.
\end{lemma}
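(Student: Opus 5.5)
The plan is to classify every edge of $G'$ from the classification of $H'$. Edges of $G'$ emulated by themselves are handled directly. Edges inside a long subchain are handled through the two gadget edges and the stored pointers. Throughout, we use the characterisation of \Cref{prop:state}: an edge $(u,v)$ is non-useless iff some path from $r$ reaches $u$ avoiding $v$, and it is nontrivial iff it is non-useless and another non-useless edge has the same head.

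\textbf{Step 1: reachability transfers.} Inner nodes $u_{b+1},\ldots,u_{e-1}$ of a long subchain have no incoming edges except chain edges. So a path of $G'$ between nodes of $H'$ either avoids the subchain interior, or enters at $u_b$ or $u_e$ and traverses the interior monotonically. It can only exit at the other endpoint, and only if all the corresponding $A$-edges (respectively $B$-edges) are present. This is exactly when the gadget edge $(u_b,u_e)$ (respectively $(u_e,u_b)$) survives in $H'$. Consequently, for nodes $x,y$ of $H'$, $x$ reaches $y$ avoiding a node $z$ of $H'$ in $G'$ iff this holds in $H'$. In particular the dominator relation among nodes of $H'$ is preserved, so for an edge of $G'$ that is also an edge of $H'$, its status in $G'$ equals its status in $H'$. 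Here we must check that the in-degree count at a head $v$ outside all subchain interiors is unchanged. This holds because the non-useless edges into $v$ in $G'$ correspond one-to-one to those in $H'$: a gadget edge $(u_b,u_e)$ stands for the single edge $(u_{e-1},u_e)$ into $u_e$. We therefore also need to argue that $(u_{e-1},u_e)$ is non-useless in $G'$ iff $(u_b,u_e)$ is non-useless in $H'$.

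\textbf{Step 2: edges inside a long subchain.} Take a long subchain with endpoints $u_b,u_e$. Let $p$ be the index of the first removed edge on the $A$-list and $q$ the first removed edge on the $B$-list; both are given by the stored pointers. In $G'$, node $u_j$ is reachable via the forward $A$-path from $u_b$ iff $j\le p$, and via the backward $B$-path from $u_e$ iff $j\ge q$. Since $r$ reaches everything, every inner node is covered. An edge $(u_j,u_{j+1})$ with $j+1<e$ is nontrivial iff both incoming edges of $u_{j+1}$ are non-useless. This holds iff $u_j$ is reachable avoiding $u_{j+1}$ and $u_{j+2}$ is reachable avoiding $u_{j+1}$, which means roughly $j\le p$ (with $u_b$ reachable avoiding the subchain) and $j+2\ge q$ (with $u_e$ reachable avoiding it). Whether $u_b$ is reachable from $r$ without passing through $u_e$, and vice versa, is read off the classification of the gadget edges in $H'$. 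For instance, $(u_e,u_b)$ being non-useless in $H'$ means $u_e$ is reachable avoiding $u_b$. So the nontrivial inner edges form a contiguous window of indices determined in $\Oh(1)$ time from $p$, $q$ and the statuses of the two gadget edges. Walking the doubly-linked lists from the window's start lists them in time proportional to their number. The boundary edges into $u_e$ and $u_b$ are handled as in Step 1.

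\textbf{Step 3: cost.} We do $\Oh(1)$ work per edge of $H'$ plus output-proportional list walks, giving $\Oh(|H'|+|NT(G')|)$.

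\textbf{Main obstacle.} The main obstacle is the case analysis at the subchain boundaries. The subtle situations are when a gadget edge is missing, or when the only route to $u_b$ passes through $u_e$, so the window degenerates. I would first settle a precise statement: $u_b$ and $u_e$ each reach the subchain interior ``from outside'' iff the corresponding gadget edge in the opposite direction is non-useless, or the endpoint is reachable in $H'$ minus the gadget. Then I would verify the window endpoints against \Cref{prop:state} case by case.
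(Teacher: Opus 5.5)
Your skeleton matches the paper's: the status of self-emulated edges is lifted from $H'$, and inside each long subchain the nontrivial edges form contiguous windows, located with the stored pointers plus $\Oh(1)$ facts about the gadget. Your Step~1 is sound and more explicit than the paper's one-line treatment, including the check that $(u_{e-1},u_e)$ is non-useless in $G'$ iff $(u_b,u_e)$ is non-useless in $H'$.

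The gap is in the condition you attach to the interior windows in Step~2. For $b\le j\le e-2$, the edge $(u_j,u_{j+1})$ is non-useless in $G'$ iff two things hold. First, the forward edges up to it survive. Second, $u_b$ is reachable from $r$ \emph{without entering the interior} $u_{b+1},\ldots,u_{e-1}$, and such a route may pass through $u_e$. You instead read this off as whether $u_b$ is reachable avoiding $u_e$, i.e.\ whether $(u_b,u_e)$ is non-useless in $H'$. You also predict that the window degenerates when every route to $u_b$ goes through $u_e$. That is false. Suppose all chain edges survive, $(u_b,u_e)$ is useless in $H'$, and $(u_e,u_b)$ is nontrivial, so $u_b$ is reached only via $u_e$ and an outside edge $(x,u_b)$. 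Two paths then exist. One goes from $r$ to $u_e$, then outside to $u_b$, then along $u_b\to u_{b+1}\to\cdots\to u_j$. The other goes from $r$ to $u_e$ avoiding $u_b$, then along $u_e\to u_{e-1}\to\cdots\to u_{j+2}$. These show every interior node has two non-useless incoming edges. So all $2(e-b-1)$ interior edges are nontrivial, yet your rule outputs none of them. Non-uselessness of a gadget edge is the right test only for the two boundary edges $(u_{e-1},u_e)$ and $(u_{b+1},u_b)$.

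The missing ingredient is what the paper extracts when it asks whether some arborescence of $H'$ \emph{avoids} a gadget edge; this uses $F(H')$, not just $U(H')$. A simple path can enter the interior only at one endpoint and leave only at the other. Hence $u_b$ is reachable without entering the interior iff $(u_e,u_b)$ is absent from $H'$ or not in $F(H')$. The cases are:
\begin{itemize}
\item absent: the backward path is broken, and $u_b$ is reachable by hypothesis;
\item useless: $u_e$ cannot be reached before $u_b$;
\item nontrivial: another non-useless edge enters $u_b$;
\item forced: every route to $u_b$ uses the interior.
\end{itemize}
Symmetrically for $u_e$ with $(u_b,u_e)$. Call these conditions $R_b$ and $R_e$.

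An interior node $u_{j+1}$ then contributes both incoming edges to $NT(G')$ iff $R_b$ and $R_e$ hold, all forward edges $(u_b,u_{b+1}),\ldots,(u_j,u_{j+1})$ survive, and all backward edges $(u_e,u_{e-1}),\ldots,(u_{j+2},u_{j+1})$ survive. Otherwise it contributes none. This is the contiguous window you wanted, and it is computable in $\Oh(1)$ from your pointers. Your closing phrase ``reachable in $H'$ minus the gadget'' names the right notion. You still need to tie it to $F(H')$, and drop the claim that routes through the opposite endpoint make the window degenerate.
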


\begin{proof}
Each arborescence of $G'$ can be naturally mapped to an arborescence of $H'$ (with an arborescence of $H'$
possibly corresponding to multiple arborescences of $G'$). For an edge of $H'$ that does not belong to the gadget created for
a long subchain, we simply check if it is nontrivial, and if so add its emulated edge to the result.

Consider the gadget created for a long subchain $\{ u_{b},u_{b+1},\ldots,u_{e}\}$. By inspecting $U(H')$
we immediately infer whether there is an arborescence of $H'$ that contains $(u_{b},u_{e})$, and similarly whether
there is an arborescence of $H'$ that contains $(u_{e},u_{b})$. By inspecting other edges incoming into
$u_{b}$ and $u_{e}$ we can infer whether there is an arborescence of $H'$ that does not contain $(u_{b},u_{e})$
and similarly whether there is an arborescence of $H'$ that does not contain $(u_{e},u_{b})$. It can be verified
that there is an arborescence of $H'$ that does not contain both $(u_{b},u_{e})$ and $(u_{e},u_{b})$ when
there is an arborescence that does not contain the former edge.
This information can be extracted in $\Oh(|H'|)$ total time for all long subchains,
and from now on we will assume that it is available.

We can now explain how to check, for each edge of a long subchain, whether it is not useless.
First, if there is an arborescence of $H'$ that contains $(u_{b},u_{e})$ then there is an arborescence of $G'$
that contains $(u_{b},u_{b+1}), (u_{b+1},u_{b+2}), \ldots, (u_{e-1},u_{e})$. Similarly, if there is an arborescence
of $H'$ that contains $(u_{e},u_{b})$ then there is an arborescence of $G'$ that contains
$(u_{e},u_{e-1}), (u_{e-2},u_{e-3}), \ldots, (u_{b+1},u_{b})$. Finally, if there is an arborescence of $H'$
that does not contain both $(u_{b},u_{e})$ and $(u_{e},u_{b})$ then we can infer the following.
Choose the largest $e'$ such that $(u_{b},u_{b+1}), (u_{b+1},u_{b+2}), \ldots, (u_{e'-1,e'})$ exist in $G'$.
Similarly, choose the smallest $b'$ such that $(u_{e},u_{e-1}), (u_{e-1},u_{e-2}), \ldots, (u_{b'+1,b'})$ exist in $G'$.
We observe that we must have $e' \geq b'-1$, as otherwise not all nodes would be reachable from $r$ in $G'$.
Then, there is an arborescence of $G'$ that contains $(u_{b},u_{b+1}), (u_{b+1},u_{b+2}), \ldots, (u_{e'-1,e'})$ and
similarly there is an arborescence of $G'$ that contains $(u_{e},u_{e-1}), (u_{e-1},u_{e-2}), \ldots, (u_{b'+1,b'})$.
This fully characterises which edges of $G'$ are not useless: $(u_{b},u_{b+1}), (u_{b+1},u_{b+2}), \ldots, (u_{e'-1},u_{e'})$
and similarly $(u_{e},u_{e-1}), (u_{e-1},u_{e-2}), \ldots, (u_{b'+1},u_{b'})$.

Recall that an edge $(u,v)$ is nontrivial when it is not useless and there exists an alternative edge $(u',v)$ that is not useless.
For $(u_{b},u_{b+1}), (u_{b+1},u_{b+2}), \ldots, (u_{e-2},u_{e-1})$, the alternative edge is one of
$(u_{e},u_{e-1}), (u_{e-1},u_{e-2}), \ldots, (u_{b'+1},u_{b'})$. Thus, among these edges
the nontrivial ones are exactly
$(u_{b'},u_{b'+1}), (u_{b'+1},u_{b'+2}), \ldots, (u_{\min(e'-1,e-2)},u_{\min(e',e-1)})$.
For $(u_{e},u_{e-1}), (u_{e-1},u_{e-2}), \ldots, (u_{b+2},u_{b+1})$, the alternative edge is one
of $(u_{b},u_{b+1}), (u_{b+1},u_{b+2}), \ldots, (u_{e'-1},u_{e'})$, making the 
nontrivial
ones exactly $(u_{e'},u_{e'-1}), (u_{e-1},u_{e-2}), \ldots, (u_{\max(b'+1,b+2)},u_{\max(b',b+1)})$.
In either case, the nontrivial edges can be extracted by first following the pointer to the first removed
edge on the corresponding list, moving to the predecessor if necessary, and then extracting the edges one-by-one.
Finally, for $(u_{e-1},u_{e})$ and $(u_{b},u_{b+1})$ we can directly inspect the relevant edge in constant time.
\end{proof}

Recall that $G_{i}$ is obtained by contracting the prefix of $A$ consisting
of $\{a_{1},a_{2},\ldots,a_{i-1}\}$ and removing $a_{i}$, and our goal is to obtain $NT(G_{i})$.
We define $G'_{i}$ as the graph obtained from $G$
by removing, for every $j=1,2,\ldots,i-1$, every edge incoming to $h(a_{j})$ except for $a_{j}$,
and additionally removing $a_{i}$. It can be readily verified that $NT(G_{i})=NT(G'_{i})$ (we slightly abuse the notation
here: due to the contractions in $G_{i}$, the tails of some edges are different, but it is still the same set of edges).
Further, we can obtain the emulation graph of $G'_{i}$, denoted $H'_{i}$, from $H$ in
$\Oh(|H|)$ time: each edge of $H$ that emulates itself can be directly inspected in constant time,
and, for each of the remaining edges of $H$, we can check in constant time if any of its emulated
edges of $G$ is removed in $G'_{i}$ by storing the range of preorder numbers (in $A$) of their tails.
Next, we apply \Cref{lem:state} on $H'$ in $\Oh(|H'|)$ time to obtain $U(H')$, $F(H')$, and $NT(H')$
in $\Oh(|H'|)$ time. Finally, we use \Cref{lem:emulates} to obtain $NT(G'_{i})=NT(G_{i})$
in $\Oh(|H'|+|NT(G'_{i})|)$ time. This requires obtaining the pointers to the first removed edge on the
lists of each long subchain, which can be precomputed for every $i$ in $\Oh(m(G))$ time.
Overall, this allows us to obtain $NT(G_{i})$ in $\Oh(|C|+\polylog+|NT(G_{i})|)$ time as required.

It remains to argue that, when $i$ is not interesting, $NT(G_{i})=NT(G_{i+1})$.
If $i$ is not interesting then $a_{i}$ and $a_{i+1}$ belong to a long subchain $\{ u_{b},u_{b+1},\ldots,u_{e}\}$,
where $a_{i}= (u_{j},u_{j+1})$ and $a_{i+1}=(u_{j+1},u_{j+2})$.
The difference between $G_{i}$ and $G_{i+1}$ is that
in $G_{i}$, we are removing $(u_{j},u_{j+1})$, while in $G_{i+1}$ we are removing $(u_{j+1},u_{j+2})$
and $(u_{j+2},u_{j+1})$. Then, because there are no other edges outgoing from
$u_{j+1}$, there is a natural bijection between the arborescences of $G_{i}$ and $G_{i+1}$:
in each of the former we have $(u_{j+2},u_{j+1})$, while in each of the latter we have
$(u_{j+1},u_{j+2})$. It can be verified that this does not change the nontriviality of any edge.

\section{Batched Construction of $\Trim{G_{i}}$}
\label{sec:batching}

The goal of this section is to show how to construct $\Trim{G_{i}}$, for every $i=b,b+1,\ldots,e$,
in the following time complexity.

\batching*

We first observe that that the nodes of $\Trim{G_i}$ will be exactly
$r$ and all nodes $h(f)$, for $f\in NT(G_i)$. Its edges correspond to
the edges in $NT(G_{i})$, more precisely, for every such edge $f\in NT(G_{i})$,
we should replace $t(f)$ with its nearest ancestor in $B$ that is either a node
of $\Trim{G_{i}}$ or belongs to the already contracted prefix of $A$,
while $h(f)$ remains unchanged. We need to determine how to update
$t(f)$, for every  $f\in NT(G_{i})$. This will be done in two steps.
First, for every $f\in NT(G_{i})$, we will locate the nearest ancestor of $t(f)$
in $B$ that is a node of $\Trim{G_{i}}$.
Second, for every $f\in NT(G_{i})$, we will locate the nearest ancestor of $t(f)$
that belongs to the already contracted prefix of $A$.
We start with performing the first step together for every $i=b,b+1,\ldots,e$,
then perform the second step together for every $i=b,b+1,\ldots,e$,
and then, for every $i=b,b+1,\ldots,e$ and every $f\in NT(G_{i})$, we set $t(f)$
to be the nearer of the ancestors found in the first and the second step.
Finally, we need to ensure that, denoting by $n_{i}$ the number of nodes
in $\Trim{G_i}$, its nodes are numbered $\{1,2,\ldots,n_{i}\}$ (and not $\{1,2,\ldots,n\}$).
If the nodes of $\Trim{G_i}$ are already sorted (say, by their preorder numbers in $B$)
then we can scan and renumber them in linear time.
However, we need to be careful with sorting as to keep the total time linear.
Of course, we can sort the nodes of $\Trim{G_{i}}$ in $\Oh(n+|NT(G_{i})|)$ in linear
time with radix sort. To avoid paying $\Oh(n)$ for each $i$ separately,
we batch together all inputs to the radix sort, for $i=b,b+1,\ldots,e$,
and sort them together, paying $\Oh(n)$ only once per each group.

We now move to describing how to implement the first and the second step.

\paragraph{First step.} In this step, for every $i=b,b+1,\ldots,e$ and every
$f\in NT(G_{i})$, we aim to locate the nearest ancestor of $t(f)$ in $B$ that is a node of $\Trim{G_{i}}$.
We assign the pre- and post-order number to every node in $B$, and construct two sets
of nodes. The set $D_{i}$ consists of $r$ and, for every $f\in NT(G_i)$, the node $h(f)$.
The set $Q_{i}$ contains, for every $f\in NT(G_{i})$, the node $t(f)$.
Then, our goal is to find, for every $u\in Q_{i}$, its nearest ancestor $v\in D_{i}$.
We sort all nodes in $Q_{i}$ and $D_{i}$ by their preorder numbers.
This can be done in total time $\Oh(n+\sum_{i=b,\ldots,e} |NT(G_i)|)$ using radix sort.
Then, we iterate over the nodes in $Q_{i}$ and $D_{i}$ in the decreasing order
of their preorder numbers (if some $u$ belongs to both $Q_{i}$ and $D_{i}$ then it should be
first considered as $u\in Q_{i}$) while maintaining an initially empty stack.
After encountering a node $u\in Q(i)$, we push it onto the stack.
After encountering a node $v\in D(i)$, we check if the topmost node $u$ on
the stack is in the subtree of $v$, which can be done with the precomputed
pre- and post-order numbers in constant time. If so then we have determined
that the nearest ancestor of $u$ is $v$, pop $u$ from the stack and repeat the check.
The total number of operations is linear in the size of $Q(i)$ and $D(i)$.
 
\paragraph{Second step.} In this step, for every $i=b,b+1,\ldots,e$ and every
$f\in NT(G_{i})$, we aim to locate the nearest ancestor of $t(f)$ in $B$ that
belongs to the already contracted prefix of $A$. Let $M(i)$ denote the set
of all such nodes. We observe that the sets $M(.)$ are monotone:
$M(b)\subseteq M(b+1)\subseteq \ldots \subseteq M(e)$. In fact,
$M(1)=\{r\}$ and $M(i+1) = M(i) \cup \{ h(a_{i})\}$.
This allows us to reformulate this step as the classical
union-find problem as follows. Initially, every node forms a singleton set.
For every node $u\notin M(e)$, we union its set with the set
of its parent in $B$. Each set keeps track of its rootmost node.
We iterate over $i=e,e-1,\ldots,b$. First, for every  $f\in NT(G_{i})$, we locate
the set containing $t(f)$, and retrieve its rootmost node $u$.
By construction, $u$ is the nearest ancestor of $t(f)$ in $B$ that belongs
to the already contracted prefix of $A$. Then, we
union the set of $h(a_{i})$ with the set of its parent in $B$.
Finally, we observe that this is a special case of the union-find problem:
namely, we know the structure of the unions in advance.
Thus, we can apply the linear-time structure of Gabow and Tarjan~\cite{GabowT85}
to process the up to $n+\sum_{i=b,b+1,\ldots,e} |NT(G_{i})|$ union and find
operations on a set of size $n$ in total time $\Oh(n+\sum_{i=b,b+1,\ldots,e} |NT(G_{i})|)$.

\section{Improving Space Complexity}
\label{sec:space}

In this section, we take a closer look at the space complexity of our enumeration algorithm. 
The space used inside a call to $\Recurse{G}$ is only $\Oh(m(G))$. However, we might
have up to $m$ such recursive calls on the stack, making the overall space usage $\Oh((m)^{2})$.

Our goal is to decrease the overall space usage to only $\Oh(m)$. To this end, we first need
to avoid creating a new copy of $G$ during the flattening. Instead, we modify $G$, and restore its original
state when returning from the recursive call. This is implemented by storing, for every group
of edges $e_{1},e_{2},\ldots,e_{k}$ that are merged into $e$, the depth of current recursive call.
Then, before returning from the recursive call we scan $G$, and for every edge that has been
created during the flattening in the beginning of the current call we restore the original edges.
This takes only $\Oh(m(G))$ additional time.

Next, finding $A$ and $B$ and storing the chain decomposition use $\Oh(m(G))$ working space,
which does not accumulate over the subsequent recursive calls. However, we need to store
$A$, $B$ and possibly the obtained chain decomposition, which uses $\Oh(n(G))$ space
per every recursive call, so we need to tweak our implementation.

We observe that if $G$ is thick then we can actually afford to recompute
$A$ for every $i=1,2,\ldots,n(G)-1$ (and we do not need $B$ at all). Thus, we only need to
store $i$, which takes constant space. Then, we need to construct $\Trim{G_{i}}$ and pass it
as the argument to the subsequent recursive call. We cannot afford to retain the whole $G$
during the subsequent recursive calls. Instead, we will only retain enough information
to reconstruct $G$ after returning from the recursive call. 
Recall $\Trim{G_{i}}$ contains the nontrivial edges of $G_{i}$, which is a subset of the edges of $G$. 
Thus, we simply keep a list of the edges of $G$ that are not present in $\Trim{G_{i}}$,
together with their corresponding endpoints in $G$ (which might be different in  $\Trim{G_{i}}$ due to the contractions) and their positions on the corresponding lists of outgoing and incoming edges.
The total size of such lists, summed over all recursive calls on the stack, is only $\Oh(m)$.
Additionally, we need to be able to recover, for each edge $(u',v')$ of $\Trim{G_{i}}$, its original endpoints
$(u,v)$ in $G$. We only describe how to recover $u$, as recovering $v$ can be implemented
with the same approach.
Recall $\Trim{G_{i}}$ is obtained by first contracting some edges, which can be seen as
collapsing multiple nodes $u_{1},u_{2},\ldots,u_{k}$ into $u_{1}$, and then renaming the nodes
so that the whole set of nodes is $\{1,2,\ldots,n'\}$, where $n'=n(\Trim{G_{i}})$.
We recover $u$ in two steps corresponding to first reversing the renaming and then reversing the collapsing.

\paragraph{Renaming. } Let the nodes of $\Trim{G_{i}}$ before the renaming be $V'=\{u_{1},u_{2},\ldots,u_{n'}\}$,
where $u_{1}<u_{2}<\ldots u_{n'}$. Then, the new name of $u_{i}$ is $i$. We store $V\setminus V'$,
which takes only $\Oh(n)$ space when summed over all recursive calls on the stack.
Then, after returning from the recursive call, we use  $V\setminus V'$ to prepare an array mapping back $i$ to $u_{i}$,
and scan the edges of $V\setminus V'$ to reverse the renaming of their endpoints.

\paragraph{Collapsing. } Each of the edges outgoing from $u_{1}$ corresponds to an edge originally outgoing
from some $u_{i}$. The edges outgoing from each node are stored on a list, and we maintain
an invariant the order of the edges on each list does not change as a result of the recursive call.
Thus, we can arrange the edges outgoing from $u_{1}$ so that we first have the edges originally
outgoing from $u_{1}$, then $u_{2}$, and so on (up to $u_{k}$). For $k=1$ no change is needed, and so we use no additional memory. If $k\geq 2$ then we store a partition of this
list into $k$ fragments, such that the $i$-th fragment corresponds to the edges originally outgoing
from $u_{i}$, and a list containing $u_{1},u_{2},\ldots,u_{k}$. This takes $\Oh(k)$ space, and allows 
to restore the original endpoints in a linear scan over the edges outgoing from $u_{1}$.
Further, $k-1$ is equal to the number of edges contracted to obtain $u_{1}$, so the sum of $\Oh(k)$
over all groups of contracted nodes this is at most the number of edges of $G$ not present in $\Trim{G_{i}}$,
which sums up to $\Oh(m)$ when summed over all recursive calls on the stack.

We now move to tweaking the implementation for thin $G$. Recall that we distinguish between two cases.
If $|NT(G_{i})| \geq c\cdot n(G)$ then we process such $i$ in $\Oh(n(G))$ time. Otherwise, we consider a
batch $b,b+1,\ldots,e$ such that $\sum_{i=b}^{e} |NT(G_{i})| \leq c\cdot n(G)$. In the latter case, by adjusting $c$
we can ensure that the number of nodes in the subsequent recursive call is at most $n(G)/2$, so we can
afford to use $\Oh(n(G))$ additional space during the subsequent recursive calls.
Finally, for $|NT(G_{i})| \geq c\cdot n(G)$ we can afford to use $\Oh(n(G))$ time before and after recursing
on $\Trim{G_{i}}$, so such $i$ can be processed as in the case of a thick $G$.
Note that in this case we cannot afford to retain $A$, $B$ and $H$, but
we can afford to rebuild them after returning from the recursive call.

Summarizing, we obtain the following final time and space complexity.

\begin{theorem}
\label{thm:space}
All $N$ arborescences of a directed graph on $n$ nodes and $m$ edges can be reported in $\Oh(n+m+N)$ total time
and $\Oh(n+m)$ space.
\end{theorem}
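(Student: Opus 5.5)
The plan is to combine \Cref{thm:time}, which already gives the $\Oh(n+m+N)$ time bound, with the space modifications described in this section. The key quantity to control is the total space held by all recursive calls on the stack at one moment. For each call $\Recurse{G}$ on the stack, I will separate its \emph{transient} working space from its \emph{retained} space. The transient space is at most $\Oh(m(G))$: flattening, finding $A$ and $B$, the chain decomposition, the emulation graph, and one batch of \Cref{lem:batching}. It is released or shrunk before recursing. The retained space is what survives while the child call runs.

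First I handle flattening. The graph is modified in place, with merged groups tagged by the recursion depth. This costs no retained space beyond $\Oh(1)$ per merged group. Each merged group removes edges from the current graph, so these tags can be charged to edges that are absent from deeper levels.

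Next come the retained data for a thick $G$, or for a thin $G$ in a step $i$ with $|NT(G_i)|\ge c\cdot n(G)$. Here I store only the index $i$, the list of edges of $G$ not present in $\Trim{G_i}$ (with their endpoints and list positions), the set $V\setminus V'$ of renamed-away nodes, and the collapsing fragments of size $\Oh(k)$ per collapsed group. Each of these is charged to an edge or node of $G$ that does not appear in the child instance. The edges of the child are a subset of the edges of $G$, with $k-1$ contracted edges per collapsed group. Hence along any root-to-leaf path of the recursion, the charged objects are distinct edges or nodes of the original graph. The total retained space over the stack is therefore $\Oh(n+m)$. The $\Oh(n(G))$ time spent rebuilding $A$, $B$ and $H$ after returning is affordable: in the thick case \Cref{lem:recurse} already allows $\Oh(n(G)m(G))$, and in the other case $\Oh(n(G))=\Oh(|NT(G_i)|)$.

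The remaining case, which I expect to be the main obstacle, is a thin $G$ processed in a batch $b,\dots,e$ with $\sum_{i=b}^{e}|NT(G_i)|\le c\cdot n(G)$. Here the call keeps $A$, $B$, $H$ and the stored instances $\Trim{G_i}$ of the batch, which takes $\Theta(n(G))$ space that cannot be charged to deleted edges. I would argue that every child has $n(\Trim{G_i})\le |NT(G_i)|+1\le c\,n(G)+1\le n(G)/2$ for a suitable small $c$ and large $n(G)$. Then consider the calls on the stack that retain $\Theta(n(\cdot))$ space in this way. Their node counts decrease at least geometrically along the stack, because node counts never increase between consecutive calls in any case. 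So the sum of their retained space is $\Oh(n)$. Since $m(G)=\Oh(n(G))$ for thin graphs by \Cref{lem:thin}, the edge-related data of these calls obeys the same geometric bound.

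Summing the three contributions, the charged deletions, the geometric series, and the transient space of the single active call, gives $\Oh(n+m)$ space. The time stays within the accounting of \Cref{lem:recurse}, which proves the theorem.
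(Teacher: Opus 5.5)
Your proposal is correct and follows essentially the same route as the paper. The shared ingredients are in-place flattening with depth tags, and retaining only the removed edges, $V\setminus V'$ and the collapsing fragments in the thick and large-$|NT(G_i)|$ cases, with $A$, $B$, $H$ rebuilt after returning. For batched thin calls you rely on $n(\Trim{G_i})\le n(G)/2$, and the only difference is that you spell out the charging-to-deleted-objects and geometric-series accounting that the paper states tersely.
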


\section{Improving Delay}
\label{sec:delay}

In this section we analyze what is the worst case scenario delay between any 
two reported arborescences. While it may seem that it is quite large, we will establish that
it is only $\Oh(m)$.

We first observe that the delay between calling $\IterateSolutions{}$ and reporting an arborescence,
and similarly the delay between reporting two arborescneces inside a call to $\IterateSolutions{}$,
is $\Oh(m)$ due to the depth of the recursion being bounded by $m$.

Next, we analyse the delay in the main recursion.
Just after entering $\Recurse{G}$, we spend only linear time on flattening and finding
$A$ and $B$, and then report $A$ as the next arborescence. Thus, the delay between entering
the procedure and reporting the next solution is only $\Oh(m(G))$. Next,
inside the call, we always spend only $\Oh(m(G))$ time in every iteration of the for
loop (in either case), so the time between returning from a child recursive
call and entering the next child recursive call is also $\Oh(m(G))$.
Finally, the time spend between returning from the last child recursive call
and returning from $\Recurse{G}$ is also $\Oh(m(G))$.

This is not yet enough to upper bound the delay by $\Oh(m)$,
since the depth of the recursion is $m$, so we might need to spend
$\Oh(m^{2})$ time when returning from a recursive call that is
the very last call in a subtree of depth $m$ of the whole recursion tree.
However, this is actually not the case.

We observe that for the last child recursive call we have $n(G_{n(G)-1}) \in \{1, 2\}$ (with possibly many parallel edges),
so the time spend there is $\Oh(m)$. So if $n(G) > 2$, then the current recursive call is either $r$ or it cannot 
be the very last child of its parent recursive call. Thus, we can think about the whole reporting
process as follows.
In every recursive call with $n(G) > 2$ we first spend $\Oh(m(G))$ time, then we report an arborescence,
and then we recurse on multiple smaller graphs, spending $\Oh(m(G))$ in-between two
such calls, then we report the last arborescence (this corresponds to recursing on
$G_{n(G)-1}$) and clean up in $\Oh(m(G))$ time. Additionally, if either $G=\G$ (so we terminate after
returning from the call) or we are guaranteed
that after returning to the parent recursive call, the next arborescence will be reported in $\Oh(m)$ time.
It is then immediate to verify that the delay is $\Oh(m)$.

We conclude with the final theorem.

\delay*

\bibliographystyle{abbrv}
\bibliography{refs.bib}

\end{document}